\newtheorem{Definition}{Definition}
\newtheorem{Proposition}{Proposition}
\newtheorem{Lemma}{Lemma}
\newtheorem{Algorithm}{Algorithm}
\DeclarePairedDelimiter{\ceil}{\lceil}{\rceil}
\newcommand\numberthis{\addtocounter{equation}{1}\tag{\theequation}}
\begin{document}
\title{}
\date{}
\author{}
\noindent
{\LARGE\bf Co-modularity and Detection of Co-communities}
\vspace{1ex}
\\
Thomas E. Bartlett$^{1\ast}$
\vspace{1ex}
\\
1. Department of Statistical Science, University College London, London WC1E 7HB, United Kingdom.
\\
$\ast$ E-mail: thomas.bartlett.10@ucl.ac.uk
\begin{abstract}
This paper introduces the notion of co-modularity, to co-cluster observations of bipartite networks into co-communities. The task of co-clustering is to group together nodes of one type with nodes of another type, according to the interactions that are the most similar. The novel measure of co-modularity is introduced to assess the strength of co-communities, as well as to arrange the representation of nodes and clusters for visualisation, and to define an objective function for optimisation. We demonstrate the usefulness of our proposed methodology on simulated data, and with examples from genomics and consumer-product reviews.
\end{abstract}

Keywords: Community detection; biclustering; network models

\section{Introduction}\label{introSect}
Networks are used to parsimoniously represent relationships between entities of the same type. Classical analysis methods use parametric models of network data, such as degree-based and/or community-based models \citep{holland1983stochastic,bickel2009nonparametric,rohe2011spectral,qin2013regularized,wilson2013testing,zhang2015estimating}. The last few years have seen a flurry of activity in statistical network analysis, see for example \citep{amini2013pseudo,cai2014robust,newman2016equivalence,zhang2020statistical}. One of the best-studied tools is the stochastic blockmodel \citep{holland1983stochastic,bickel2009nonparametric}, and various extensions to it, \citep{zhao2012consistency,airoldi2009mixed,gopalan2013efficient}; various methods of fitting this model have been proposed, where maximizing modularity remains an important practical approach \citep{girvan2002community,bickel2009nonparametric}. Recent work in clustering network nodes has generalised the applicability of the stochastic blockmodel, by showing that arbitrary exchangeable networks can be represented using a blockmodel \citep{diaconis1977finite,bickel2009nonparametric,olhede2014net}; such a representation is called a `network histogram'. The network histogram method \citep{olhede2014net} can be used to estimate the optimal granularity at which communities, or functional subnetwork modules, can be approximated and isolated in social and biological networks; i.e., to estimate the optimal number of clusters or communities of network nodes. Alternatively, several Bayesian approaches to estimating the optimal number of communities in a network have also been proposed \citep{riolo2017efficient,peixoto2019bayesian,yen2020community}. However, it is well established that when clustering is implemented, estimating the optimal number of clusters is an important and separate problem from the design of the clustering methodology \citep{bickel2009springer}. For example, sophisticated solutions to this problem such as the gap statistic \citep{tibshirani2001estimating} propose methodology for estimating the optimal number of clusters, and this is done independently from the choice of clustering methodology. In this paper we focus on the problem of clustering methodology for variables of different types, i.e., co-clustering.

Studying relationships between variables of the same type is naturally very useful; its simplest generalisation is to study relationships between variables of a different type; this is known as the co-clustering problem \citep{flynn2012consistent,choi2014co,madeira2004biclustering,gao2016optimal}, and is of much current interest in application areas from genomics to natural language processing \citep{bhattacharya2017gpu,clevert2017rectified,rugeles2017biclustering}. The co-clustering problem can also be approached non-parametrically, as is made clear by \cite{choi2014co} and \cite{gao2016optimal}. We start from the modularity approach to recognising communities \citep{girvan2002community}, realising that extending such understanding to variables of different types is nontrivial \citep{aldous1985exchangeability,madeira2004biclustering}. Having recognised communities in both types of variables, we need to transform the clustering or grouping of both types of variables into an ordering of groups. This is not inherent to the formulation of the Aldous-Hoover representation of the generating mechanism of the random array we are modelling, but is important for visualisation purposes. We aggregate the modularity over the groupings of each of the two types of variables (corresponding to {\em rows} or {\em columns}) to guide this choice of visualisation. We also use modularity to compare co-communities, which we define as pairings of one group or cluster of each type of node, providing a unique paradigm for understanding these important bipartite network structures. Finally, to demonstrate the usefulness of our proposed methodology, we analyse two characteristic network data sets, as follows. A genomics data set, in which a co-community represents a functional module that involves two types of genomic features (i.e., measured on two different data modalities or platforms). And, a movie review data set, in which a co-community represents a set of movies (probably with similar characteristics) enjoyed by a particular group of (possibly similar) people. Importantly, in our model set-up, it is possible for a node to be part of one co-community, or part of multiple co-communities, or part of no co-communities. Thus, we show how our proposed analysis methods enable us to discover both known and hitherto unknown characteristics of these two data sets. This paper is organised as follows: Section \ref{coModCoComDet} defines the stochastic block model, and gives the representation of an arbitrary separately exchangeable array. It also defines the co-modularity, and explains how the array data will be analysed. Then Section \ref{idenCompCoCom} shows how to find the co-communities in data, and Section \ref{examples} gives examples to illustrate the performance of our proposed methodology. The Appendix provides all proofs of the paper. 

\section{Co-modularity and Co-Community Detection}\label{coModCoComDet}
\begin{wrapfigure}{r}{0.35\textwidth}
\vspace{-6ex}
\begin{center}
\includegraphics[width=0.33\textwidth]{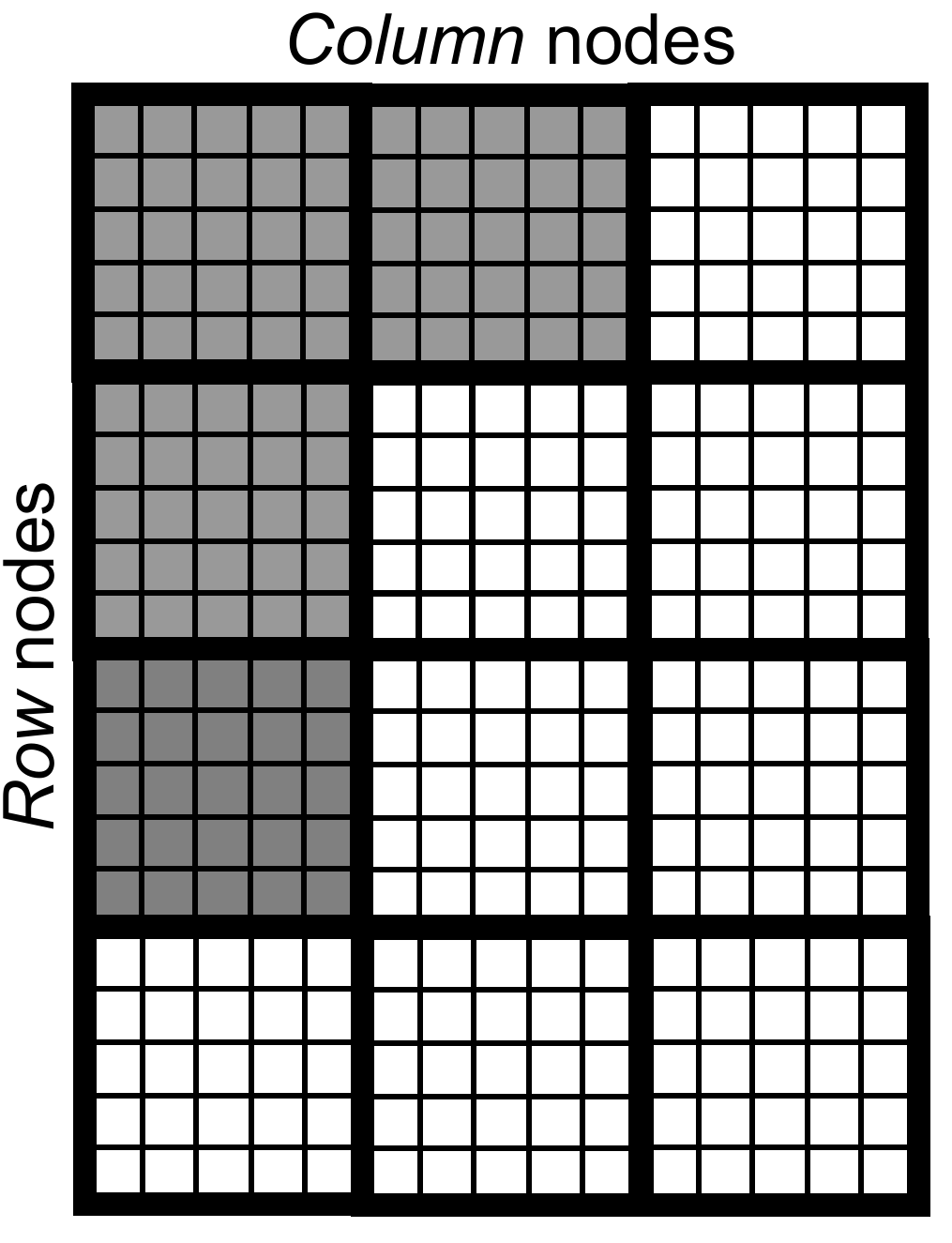}
\end{center}
\vspace{-3ex}
\caption{Co-community structure.} \label{coComBaseEg}
\caption*{The margins of the random array represent different types of nodes, and the array elements define the edges of the bipartite network. The shaded blocks represent co-communities of the two different types of nodes.}
\vspace{-2ex}
\end{wrapfigure}
We begin this section by defining the degree-corrected stochastic co-blockmodel \citep{rohe2012co,flynn2012consistent,choi2014co} together with notation. We then give a definition of the Newman-Girvan modularity \citep{newman2004finding}, and by analogy, we define a quantity which we term the `co-modularity', and we specify an algorithm for maximising this quantity. Whilst previous work (see e.g. that by \cite{dhillon2001co}) has separately identified groupings of the different types of nodes, the notion of `co-modularity' considers the pairing into blocks of these groupings of different types of nodes (an issue which doesn't arise in networks with only one type of node). We call such a pairing a `co-community', illustrated by the shaded blocks in Fig.\ref{coComBaseEg}. We show that under certain conditions, maximising the co-modularity in this way is equivalent to maximising the model likelihood of the specified degree corrected stochastic co-blockmodel.
\begin{Definition}[Degree-corrected stochastic co-blockmodel]\label{mainModelDef}
For $m,l\in\mathbb{N}^+$, assign the labelling for the set of $X$-nodes as $\{1,...,m\}$, and for the set of $Y$-nodes $\{1,...,l\}$, where this labelling is chosen without loss of generality. Denote an $X$-node grouping as $g^{(X)}_p\in G^{(X)}$, $p\in\left\{1,...,k^{(X)}\right\}$, and a $Y$-node grouping as $g^{(Y)}_q\in G^{(Y)}$, $q\in\left\{1,...,k^{(Y)}\right\}$, where $G^{(X)}$ and $G^{(Y)}$ are exhaustive lists of mutually exclusive $X$ and $Y$-node groupings, respectively. Define map functions $z^{(X)}(i)$ and $z^{(Y)}(j)$, such that $g^{(X)}_p=\left\{i:z^{(X)}(i)=p\right\}$, and $g^{(Y)}_q=\left\{j:z^{(Y)}(j)=q\right\}$. Define co-community connectivity parameters $\boldsymbol{\theta}\in[0,1]^{k^{(X)}\times k^{(Y)}}$, where $\theta_{z^{(X)}(i),z^{(Y)}(j)}$ is the propensity of $X$-node $i$ in group $z^{(X)}(i)$ to form a connection with $Y$-node $j$ in group $z^{(Y)}(j)$. Define also node-specific connectivity parameters $\boldsymbol{\pi}^{(X)}\in\mathbb{R}_{\ge 0}^m$ and $\boldsymbol{\pi}^{(Y)}\in\mathbb{R}_{\ge 0}^l$. Let the elements of the adjacency matrix $\mathbf{A}\in\{0,1\}^{m\times l}$ follow the law of:
\begin{equation}
A_{ij}\sim\operatorname{Bernouilli}\left({{\pi}^{(X)}_i{\pi}^{(Y)}_j\theta_{z^{(X)}(i),z^{(Y)}(j)}}\right), 1\leq i\leq m, 1\leq j\leq l.\label{DCSCBM}
\end{equation}
Then, we call the generative mechanism of $A_{ij}$ `the degree corrected stochastic co-blockmodel'.
\end{Definition}
\noindent
We note that the terminology `$X$-nodes' and `$Y$-nodes' is non-standard; we introduce it here, to increase clarity of exposition. To improve identifiability of parameters of the model in Definition \ref{coModCoComDet}, and defining a co-community as a pairing of the $X$-node grouping $g^{(X)}_p$ with the $Y$-node grouping $g^{(Y)}_q$, we introduce a special case of the blockmodel favoured by many other authors \citep{newman2013spectral}, that $\theta_{z^{(X)}(i),z^{(X)}(j)}$ may take only two values: 
\begin{align}
   \theta_{p,q}=&\begin{cases}
    \theta_{\mathrm{in}}, & \text{if the pairing of $X$-node grouping }g^{(X)}_p\text{ with $Y$-node}\\
    							 & \text{grouping }g^{(Y)}_q\text{ is a co-community},\\
    \theta_{\mathrm{out}}, & \text{otherwise}.
\end{cases} \label{poissonCons}
\end{align}
We can also replace the Bernoulli model likelihood with a Poisson likelihood: because the Bernoulli success probability is typically small, and the number of potential edges (i.e., pairings of nodes) is large, a Poisson distribution with the same mean converges to the same distribution, and so it makes little difference in practice \citep{zhao2012consistency,perry2012null}. Its usage greatly simplifies the technical derivations. Hence, we calculate the model log-likelihood as follows (assuming $A_{ij}\in\{0,1\}$ and therefore $A_{ij}!=1$ for all $i$, $j$):
\begin{multline}
\ell\left(\boldsymbol{\theta},\boldsymbol{\pi}^{(X)},\boldsymbol{\pi}^{(Y)};G^{(X)},G^{(Y)}\right)\\
=\sum_{i=1}^m\sum_{j=1}^lA_{ij}\ln{\left(\pi^{(X)}_i\pi^{(Y)}_j\theta_{z^{(X)}(i),z^{(Y)}(j)}\right)}-\pi^{(X)}_i\pi^{(Y)}_j\theta_{z^{(X)}(i),z^{(Y)}(j)}. \label{modelEll}
\end{multline}

The Newman-Girvan modularity \citep{newman2004finding} measures, for a particular partition of a network into communities, the observed number of edges between community members, compared to the expected number of edges between community members without the community partition with the degree correction. The Newman-Girvan modularity may be defined as follows:
\begin{Definition}[Newman-Girvan modularity]\label{NGmodDef}
Define $\mathbf{A}\in\{0,1\}^{n\times n}$ as a symmetric adjacency matrix representing a unipartite network with nodes $i\in\left\{1,...,n\right\}$, define $\mathbf{d}$ as the degree vector of the nodes of this network, $d_i=\sum_{j=1}^nA_{ij}$, and define the normalising factor $d^{++}$ as twice the total number of edges, $d^{++}=\sum_{i=1}^nd_i$. Define a community, or grouping, of nodes as $g\in G$, where $G$ represents the set of all such groupings of nodes, define the map function $z(i)$ such that $g_a=\left\{i:z(i)=a\right\}$, and let $\mathbb{I}\left[z(i)=z(j)\right]$ specify whether nodes $i$ and $j$ appear together in any community $g$, such that:
\[
    \mathbb{I}\left[z(i)=z(j)\right]= 
\begin{cases}
    1, & \text{if nodes }i\text{ and }j\text{ are grouped together}\\ 
    	& \text{in any community }g\in G,\\
    0, & \text{otherwise}.
\end{cases}
\]
Then, the Newman-Girvan modularity $Q_{NG}$ is defined as:
\begin{align}
Q_{NG}=\frac{1}{d^{++}}\sum_{i=1}^n\sum_{j=1}^n\left[A_{ij}-\frac{d_id_j}{d^{++}}\right]\cdot\mathbb{I}\left[z(i)=z(j)\right]. \label{NGmod}
\end{align}
\end{Definition}
\noindent
The co-modularity is then defined by analogy with the Newman-Girvan modularity (Definition \ref{NGmodDef}) as follows:
\begin{Definition}[Co-modularity]\label{coModDef}
With $m$ and $l$ given by Definition \ref{mainModelDef} and $\mathbf{A}$ generated according to Definition \ref{mainModelDef}, define $\mathbf{d}^{(X)}$ and $\mathbf{d}^{(Y)}$ as the degree vectors of the $X$ and $Y$-nodes of the network, $d^{(X)}_i=\sum_{j=1}^lA_{ij}$ and $d^{(Y)}_j=\sum_{i=1}^mA_{ij}$, and define the normalising factor $d^{++}$ as twice the total number of edges, $d^{++}=\sum_{i=1}^md^{(X)}_i=\sum_{j=1}^ld^{(Y)}_j$. With $g^{(X)}$ and $g^{(Y)}$, $z^{(X)}$ and $z^{(Y)}$ also defined in direct analogue according to Definition \ref{mainModelDef}, let $c_t=\left\{p,q\right\}\in C$, $t=\left\{1,...,T\right\}$. The enumeration of the pair $\left\{p, q\right\}$ is arbitrary, and is to facilitate ease of access of the co-blocks in a chosen order. The co-block $c_t$ specifies that the $X$-node grouping $g^{(X)}_p$ is paired with the $Y$-node grouping $g^{(Y)}_q$; we refer to such a pairing as a `co-community'. Furthermore, let $\Psi\left(C;G^{(X)},G^{(Y)};i,j\right)\in\left\{0,1\right\}$ specify whether nodes $i$ and $j$ appear together in any co-community $c\in C$, such that:
\[
    \Psi\left(C;G^{(X)},G^{(Y)};i,j\right)= 
\begin{cases}
    1, & \text{if }\left\{z^{(X)}(i),z^{(Y)}(j)\right\}=c:c\in C,\\
    0, & \text{otherwise}.
\end{cases}
\]
Then, the co-modularity $Q_{XY}$ is defined as:
\begin{align}
Q_{XY}=\frac{1}{d^{++}}\sum_{i=1}^m\sum_{j=1}^l\left[A_{ij}-\frac{d^{(X)}_id^{(Y)}_j}{d^{++}}\right]\Psi\left(C;G^{(X)},G^{(Y)};i,j\right). \label{coMod}
\end{align}
\end{Definition}
\noindent
We note that for the co-modularity (unlike the Newman-Girvan modularity), we require a set of pairings of $X$-node groupings with $Y$-node groupings $C$, such that each $c_t\in C$ is a pairing of an $X$-node grouping $g^{(X)}_p\in G^{(X)}$ with a $Y$-node grouping $g^{(Y)}_q\in G^{(Y)}$. Also, due to the asymmetry of the co-clustering problem, $c_t=\left\{p,q\right\}\neq\left\{q,p\right\}$. This separately specified set of parings $C$ is not required in the case of the Newman-Girvan modularity, because in the unipartite network setting, there is only one type of node, and hence node groupings already `match-up' with one another. This can be visualised, in the unipartite network setting, as community structure present along the leading diagonal of the adjacency matrix, if the nodes are ordered by community. In the co-community setting, an $X$-node grouping $g^{(X)}$ may be paired in $C$ with many, with one, or with no $Y$-node groupings $g^{(Y)}\in G^{(Y)}$, and equivalently a $Y$-node grouping $g^{(Y)}$ may be paired in $C$ with many, with one, or with no $X$-node groupings $g^{(X)}\in G^{(X)}$ (Fig.\ref{coComBaseEg}). Further, if the $X$-nodes and $Y$-nodes of the network are arranged in the adjacency matrix according to the groupings $g^{(X)}$ and $g^{(Y)}$, there is no reason co-communities should appear along the leading diagonal. Hence, the function $\Psi$ in Equation \ref{coMod} generalises the role of the indicator function in Equation \ref{NGmod}. We note that other approaches to this problem directly specify a null model \citep{bassett2013robust}. We also note that sometimes in practice, we must relax the requirement of $\Psi\in\left\{0,1\right\}$; the reason for this is made clear in the technical derivations (for tractability) in Appendix A which relate to Algorithm \ref{specClustAlg} (which follows next).

Community detection of $k$ communities can be performed by fitting the degree-corrected stochastic blockmodel. This is equivalent, under many circumstances, to spectral clustering \citep{bickel2009nonparametric,riolo2014first,newman2013spectral}, which may be carried out by grouping the nodes into $k$ clusters in the space of the eigenvectors corresponding to the 2\textsuperscript{nd} to $k$\textsuperscript{th} greatest eigenvalues of the Laplacian $\mathbf{L}=\mathbf{D}^{-1/2}\mathbf{A}\mathbf{D}^{-1/2}$, where $\mathbf{D}$ is the diagonal matrix of the degree distribution. Co-community detection in a bipartite network of nodes attributed to the variables $X$ and $Y$ (respectively, $X$-nodes and $Y$-nodes), can equivalently be performed by degree-corrected spectral clustering \citep{dhillon2001co}. 

A procedure to find an assignment of $X$ and $Y$-nodes to $k^{(X)}$ $X$-node groupings (`row clusters') and  $k^{(Y)}$ $Y$-node groupings (`column clusters') respectively, which finds a (possibly locally) optimum value of the co-modularity $Q_{XY}$, is specified in Algorithm \ref{specClustAlg}:
\begin{Algorithm}\label{specClustAlg}
With $\mathbf{A}$ and $Q_{XY}$ defined as in Definition \ref{mainModelDef}, and $\mathbf{d}^{(X)}$ and $\mathbf{d}^{(Y)}$ defined as in Definition \ref{coModDef}:
\begin{enumerate}
\item Calculate the co-Laplacian $\mathbf{L}_{XY}$ \citep{dhillon2001co} as:
\begin{equation}
\mathbf{L}_{XY}=\left(\mathbf{D}^{(X)}\right)^{-1/2}\mathbf{A}\left(\mathbf{D}^{(Y)}\right)^{-1/2}, \label{coLaplacian}
\end{equation}
where $\mathbf{D}^{(X)}$ and $\mathbf{D}^{(Y)}$ are the diagonal matrices of $\mathbf{d}^{(X)}$ and $\mathbf{d}^{(Y)}$, respectively.
\item Calculate the singular value decomposition (SVD) of the co-Laplacian $\mathbf{L}_{XY}$.
\item Separately cluster the $X$ and $Y$-nodes in the spaces of the left and right singular vectors corresponding to the 2\textsuperscript{nd} to $k^{(X)}$\textsuperscript{th} and 2\textsuperscript{nd} to $k^{(Y)}$\textsuperscript{th} greatest singular values, respectively, of this SVD of $\mathbf{L}_{XY}$.
\item Identify the set of co-communities $C$, as pairings of particular $X$-node groupings and $Y$-node groupings.
\end{enumerate}
\end{Algorithm}
\noindent
We note that there are a range of choices and alternatives for steps 1-3 of Algorithm \ref{specClustAlg}. The specifics of how to identify the set of co-communities in step 4 of Algorithm \ref{specClustAlg}, i.e., the identification of $C$ as particular pairings of the identified $X$-node groupings and $Y$-node groupings, are discussed further in Section \ref{idenCompCoCom}. Technical derivations relating to Algorithm \ref{specClustAlg} appear in Appendix A, and are based on arguments made previously in the context of unipartite (symmetric) community detection \citep{newman2013spectral} for two communities, extending them to this context of (asymmetric) co-community detection. We note in particular, that the notion of modularity assumes that within-community edges are more probable than between-community edges, and therefore modularity maximisation is only consistent if constraints are applied to ensure this assumption holds \citep{zhao2012consistency}. In the community detection setting, under suitable constraints, the solutions which maximise model likelihood and modularity are identical \citep{bickel2009nonparametric}.
\begin{Proposition}\label{poisLprop}
The solution which maximises the model likelihood specified in equation \ref{modelEll}, subject also to the constraint of equation \ref{poissonCons}, is equivalent to the maximum co-modularity assignment obtained via Algorithm \ref{specClustAlg}.
\end{Proposition}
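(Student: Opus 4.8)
The plan is to adapt the argument of \citet{bickel2009nonparametric}, which equates modularity maximisation with blockmodel likelihood maximisation, together with the spectral reformulation of \citet{newman2013spectral}, to the asymmetric bipartite setting. There are three stages: (i) profile the Poisson log-likelihood~\eqref{modelEll} over the nuisance parameters $\boldsymbol{\pi}^{(X)},\boldsymbol{\pi}^{(Y)}$ and, under the two-value restriction~\eqref{poissonCons}, over $\theta_{\mathrm{in}},\theta_{\mathrm{out}}$ and the co-community set $C$; (ii) expand the resulting profiled objective to leading order in the signal gap $\theta_{\mathrm{in}}-\theta_{\mathrm{out}}$ and identify it, up to a partition-independent constant, with $d^{++}Q_{XY}$; and (iii) rewrite $Q_{XY}$ as a bilinear form in the cluster-membership matrices, relax the integrality, and show that the relaxed optimum is spanned by exactly the singular vectors clustered on in Algorithm~\ref{specClustAlg}.

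First I would fix $z^{(X)},z^{(Y)}$ and $C$ and differentiate~\eqref{modelEll}. Stationarity in $\pi^{(X)}_i$ gives $d^{(X)}_i=\pi^{(X)}_i\sum_{j}\pi^{(Y)}_j\theta_{z^{(X)}(i),z^{(Y)}(j)}$ and symmetrically for $\pi^{(Y)}_j$; under~\eqref{poissonCons} and the usual scale normalisation these are solved by $\hat\pi^{(X)}_i\propto d^{(X)}_i$ and $\hat\pi^{(Y)}_j\propto d^{(Y)}_j$, and at the joint maximum the fitted mass matches the data, $\sum_{ij}\hat\pi^{(X)}_i\hat\pi^{(Y)}_j\hat\theta_{z^{(X)}(i),z^{(Y)}(j)}=d^{++}$. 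Maximising over the two values then makes $\hat\theta_{\mathrm{in}}$ (resp.\ $\hat\theta_{\mathrm{out}}$) the ratio of the number of edges inside (resp.\ outside) the co-communities to the corresponding configuration-model expectation built from $d^{(X)}_id^{(Y)}_j/d^{++}$, and the optimal $C$ contains precisely those co-blocks $(p,q)$ whose observed edge count exceeds this null expectation. Substituting back, the terms $\sum_{ij}A_{ij}\ln(\hat\pi^{(X)}_i\hat\pi^{(Y)}_j)$ and $-\sum_{ij}\hat\pi^{(X)}_i\hat\pi^{(Y)}_j\hat\theta_{z^{(X)}(i),z^{(Y)}(j)}$ are constant in the partition, so the profiled log-likelihood equals a constant plus $\sum_{ij}A_{ij}\ln\hat\theta_{z^{(X)}(i),z^{(Y)}(j)}$.

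Second, I would write $\hat\theta_{z^{(X)}(i),z^{(Y)}(j)}=\hat\theta_{\mathrm{out}}+(\hat\theta_{\mathrm{in}}-\hat\theta_{\mathrm{out}})\,\Psi\!\left(C;G^{(X)},G^{(Y)};i,j\right)$ and Taylor-expand $\ln\hat\theta_{z^{(X)}(i),z^{(Y)}(j)}$ in the gap $\hat\theta_{\mathrm{in}}-\hat\theta_{\mathrm{out}}$ about the null $\hat\theta_{\mathrm{in}}=\hat\theta_{\mathrm{out}}$, exactly as in \citet{newman2013spectral}. Because the normalisation forces $\hat\pi^{(X)}_i\hat\pi^{(Y)}_j\hat\theta_{\mathrm{out}}\approx d^{(X)}_id^{(Y)}_j/d^{++}$, the first-order term is proportional to $\sum_{ij}\bigl(A_{ij}-d^{(X)}_id^{(Y)}_j/d^{++}\bigr)\Psi\!\left(C;G^{(X)},G^{(Y)};i,j\right)=d^{++}Q_{XY}$, so to leading order maximising the profiled likelihood over $(z^{(X)},z^{(Y)},C)$ is the same as maximising $Q_{XY}$. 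This is the step I expect to be the main obstacle: it must be carried out keeping the degree-correction factors explicit so that the configuration term $d^{(X)}_id^{(Y)}_j/d^{++}$ emerges with exactly the right normalisation, and it is precisely here that the constraint~\eqref{poissonCons} and the assumption that within-co-community connections are more probable than between-co-community ones are needed, to fix the sign of the leading term; as with all modularity--likelihood correspondences, the equivalence holds to leading order in the signal strength.

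Third, I would put $Q_{XY}$ in matrix form. Writing $\mathbf{B}=\mathbf{A}-\tfrac{1}{d^{++}}\mathbf{d}^{(X)}\bigl(\mathbf{d}^{(Y)}\bigr)^{\top}$ for the co-modularity matrix, $\mathbf{U}\in\{0,1\}^{m\times k^{(X)}}$ and $\mathbf{V}\in\{0,1\}^{l\times k^{(Y)}}$ for the membership matrices and $\mathbf{S}\in\{0,1\}^{k^{(X)}\times k^{(Y)}}$ for the selector encoding $C$, we have $d^{++}Q_{XY}=\operatorname{tr}\!\bigl(\mathbf{U}^{\top}\mathbf{B}\mathbf{V}\mathbf{S}^{\top}\bigr)$. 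Relaxing $\mathbf{U},\mathbf{V}$ to carry degree-weighted orthonormal columns and optimising, the optimum is the span of the leading left and right singular vectors of $\bigl(\mathbf{D}^{(X)}\bigr)^{-1/2}\mathbf{B}\bigl(\mathbf{D}^{(Y)}\bigr)^{-1/2}$. Finally, $\bigl(\mathbf{D}^{(X)}\bigr)^{-1/2}\bigl(\tfrac1{d^{++}}\mathbf{d}^{(X)}(\mathbf{d}^{(Y)})^{\top}\bigr)\bigl(\mathbf{D}^{(Y)}\bigr)^{-1/2}$ is rank one, with singular triple $\bigl((\mathbf{D}^{(X)})^{1/2}\mathbf{1},\,1,\,(\mathbf{D}^{(Y)})^{1/2}\mathbf{1}\bigr)$, which is also the top singular triple of the co-Laplacian $\mathbf{L}_{XY}=\bigl(\mathbf{D}^{(X)}\bigr)^{-1/2}\mathbf{A}\bigl(\mathbf{D}^{(Y)}\bigr)^{-1/2}$ of~\eqref{coLaplacian}; subtracting it merely deletes that trivial component, so the relaxed co-modularity optimum is spanned by the $2$nd through $k^{(X)}$th left and $2$nd through $k^{(Y)}$th right singular vectors of $\mathbf{L}_{XY}$ --- precisely what Algorithm~\ref{specClustAlg} clusters on. Rounding the relaxed solution back to a hard partition yields the stated equivalence; the bookkeeping for this last stage, replacing the symmetric eigendecomposition of \citet{newman2013spectral} by an SVD and tracking the left and right singular spaces separately, is routine and is carried out in Supplement~A.
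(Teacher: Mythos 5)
Your overall architecture --- profile out the nuisance parameters, reduce the profiled likelihood to a bilinear form in the label vectors, then solve the degree-weighted relaxation via the SVD of the co-Laplacian after deflating the rank-one degree term --- matches the paper's Supplements A and B, and your third stage (including the identification of $\bigl((\mathbf{D}^{(X)})^{1/2}\mathbf{1},\,1,\,(\mathbf{D}^{(Y)})^{1/2}\mathbf{1}\bigr)$ as the top singular triple of $\mathbf{L}_{XY}$, guaranteed maximal by Perron--Frobenius) is essentially the paper's argument. Where you genuinely diverge is the middle step. You obtain the co-modularity by Taylor-expanding $\ln\hat\theta_{z^{(X)}(i),z^{(Y)}(j)}$ in the gap $\hat\theta_{\mathrm{in}}-\hat\theta_{\mathrm{out}}$, so your correspondence holds only to leading order in the signal strength, and you flag this as the weak point. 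The paper avoids the expansion entirely: working with the two-co-community case and labels $s_i,r_j\in\{-1,1\}$, it uses the exact identities $\theta_{z^{(X)}(i),z^{(Y)}(j)}=\tfrac12\bigl(\theta_{\mathrm{in}}+\theta_{\mathrm{out}}+r_is_j(\theta_{\mathrm{in}}-\theta_{\mathrm{out}})\bigr)$ and $\ln\theta_{z^{(X)}(i),z^{(Y)}(j)}=\tfrac12\bigl(\ln(\theta_{\mathrm{in}}\theta_{\mathrm{out}})+r_is_j\ln(\theta_{\mathrm{in}}/\theta_{\mathrm{out}})\bigr)$, which linearise the profiled log-likelihood exactly in $s_ir_j$ and yield $\widetilde{\ell}=\sum_{ij}\bigl(A_{ij}-\eta\, d^{(X)}_id^{(Y)}_j\bigr)s_ir_j$ with $\eta=(\theta_{\mathrm{in}}-\theta_{\mathrm{out}})/(\ln\theta_{\mathrm{in}}-\ln\theta_{\mathrm{out}})$. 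Crucially, the equivalence with co-modularity is then \emph{not} obtained by arguing $\eta\approx 1/d^{++}$: instead one shows that the stationary $\mathbf{s},\mathbf{r}$ of the constrained problem satisfy $\mathbf{s}^{\top}\mathbf{d}^{(X)}=0$ and $\mathbf{r}^{\top}\mathbf{d}^{(Y)}=0$, so the rank-one correction --- whether weighted by $\eta$ or by $1/d^{++}$ --- is annihilated, and both relaxed problems collapse to the same SVD of $\mathbf{L}_{XY}$. If you replace your Taylor expansion by this observation (your own stage (iii) already contains the needed deflation argument; just apply it to the exactly profiled objective with coefficient $\eta$ rather than to $Q_{XY}$ alone), you recover the paper's exact equivalence at the level of the relaxation and the "main obstacle" you anticipate disappears. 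One further difference of emphasis: you set up the general $k^{(X)}\times k^{(Y)}$ case with membership matrices and a selector $\mathbf{S}$ from the outset, whereas the paper proves only the two-co-community case in detail and extends to general $k$ by the simplex-vertex argument of the unipartite literature; your trace formulation $d^{++}Q_{XY}=\operatorname{tr}\bigl(\mathbf{U}^{\top}\mathbf{B}\mathbf{V}\mathbf{S}^{\top}\bigr)$ is a clean way to organise that extension, but the profiling and exact linearisation in stage (i)--(ii) would then need the multi-group analogue of the $\pm1$ identity, which is the part the paper deliberately does not carry out.
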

\begin{proof}
The proof for the case of two co-communities appears in Appendix B. It extends arguments made previously in relation to community detection \citep{newman2013spectral} to this context of co-community detection.
\end{proof}

\section{Identification and Comparison of Co-communities}\label{idenCompCoCom}
Fitting the stochastic co-blockmodel by spectral clustering as described in Algorithm \ref{specClustAlg}, involves using $k$-means to cluster the $X$ and $Y$-nodes in the spaces of the left and right singular vectors of the co-Laplacian (equation \ref{coLaplacian}). However, there is subsequently the problem of how to identify the co-communities, as represented by the shaded blocks in Fig.\ref{coComBaseEg}. This problem of identifiability is novel: it does not arise when fitting the stochastic blockmodel to unipartite networks by spectral clustering, because of the symmetry of the problem (if there is only one type of node, then different types of nodes do not need to be grouped). Finding the co-communities (illustrated by the shaded blocks in Fig.\ref{coComBaseEg}) consists of estimating the set $C$ (Definition \ref{coModDef}) of pairings of $X$-node groupings $g^{(X)}\in G^{(X)}$ with $Y$-node groupings $g^{(Y)}\in G^{(Y)}$. It replaces identifying the ``diagonal'' of the blockmodel, a concept that is less straightforward than for a symmetric adjacency matrix. In this section we propose novel methodology to address this problem in a fully automated way in the bipartite network setting, along with related issues of visualisation and optimisation.

Fitting the symmetric blockmodel in the unipartite community detection setting, there are exactly $k=k^{(X)}=k^{(Y)}$ communities (because of symmetry). Each row grouping matches up with exactly one column grouping, because the row and column groupings are the same thing. On the other hand, fitting the asymmetric co-blockmodel by spectral clustering as in Algorithm \ref{specClustAlg} leads to $k^{(X)}$ and $k^{(Y)}$ row and column clusters. Hence, these $k^{(X)}$ and $k^{(Y)}$ row and column clusters provide $k^{(X)}\times k^{(Y)}$ potential co-communities. Which of these are significant (as in the shaded blocks in Fig.\ref{coComBaseEg})? The best-known solution to this problem clusters all the nodes at once (after concatenating the left and right singular vectors) \citep{dhillon2001co}, instead of clustering the $X$ and $Y$-nodes separately. However, that approach requires $k^{(X)}=k^{(Y)}$, so that each $X$ node grouping is paired with exactly one $Y$-node grouping. Our approach does not have this restriction, and hence can model a broader class of bipartite network structures, by introducing the notion of co-communities, as illustrated in Fig.\ref{coComBaseEg}. On the other hand, our approach must answer the question, how should we assess and compare the $k^{(X)}\times k^{(Y)}$ potential co-communities? I.e., how should we compare each different pairing of an estimated $X$-node grouping $\hat{g}^{(X)}\in\hat{G}^{(X)}$, with an estimated $Y$-node grouping $\hat{g}^{(Y)}\in\hat{G}^{(Y)}$, to provide an assignment of the  $X$-nodes and $Y$-nodes to co-communities? In practice, we expect the number of co-communities, $T=\left|C\right|$ (where $\left|\cdot\right|$ represents cardinality), to be significantly less than $k^{(X)}\times k^{(Y)}$. In the unipartite community detection setting, $k^{(X)}=k^{(Y)}=k$, and only the blocks on the diagonal can be communities: hence in effect there we have $T=k=\sqrt{k^{(X)}\times k^{(Y)}}$.

To estimate the set of co-communities, $c_t\in C$, $t=\left\{1,...,T\right\}$, in this bipartite network setting, we calculate the `local co-modularity' for each pairing $\hat{g}^{(X)}$ with $\hat{g}^{(Y)}$, by considering a relevant sub-part of the co-modularity matrix $\textbf{B}$ (equation \ref{Bdef}):
\begin{Definition}[Local co-modularity]\label{localCoModDef}
With $\mathbf{A}$ given by Definition \ref{mainModelDef}, with $\mathbf{d}^{(X)}$, $\mathbf{d}^{(Y)}$ and $d^{++}$ given by Definition \ref{coModDef}, with 
\begin{equation}
B_{ij}=A_{ij}-\frac{d^{(X)}_id^{(Y)}_j}{d^{++}},\qquad
\mathbf{B}=\mathbf{A}-\frac{1}{d^{++}}\mathbf{d}^{(X)}\left(\mathbf{d}^{(Y)}\right)^{\top}, \label{Bdef}
\end{equation}
and with the set of $X$-node groupings and the set of $Y$-node groupings estimated according to Algorithm \ref{specClustAlg} as $\hat{G}^{(X)}$ and $\hat{G}^{(Y)}$ respectively, where $\left|\hat{G}^{(X)}\right|=k^{(X)}$ and $\left|\hat{G}^{(Y)}\right|=k^{(Y)}$, where $\left|\cdot\right|$ represents cardinality, for a particular pairing of estimated $X$-node grouping $\hat{g}^{(X)}\in\hat{G}^{(X)}$ with estimated $Y$-node grouping $\hat{g}^{(Y)}\in\hat{G}^{(Y)}$, the local co-modularity $Q_{XY}\left(\hat{g}^{(X)},\hat{g}^{(Y)}\right)$ is defined as:
\begin{equation}
Q_{XY}\left(\hat{g}^{(X)},\hat{g}^{(Y)}\right)=\frac{1}{d^{++}}\sum_{i\in \hat{g}^{(X)}}\sum_{j\in \hat{g}^{(Y)}}B_{ij}. \label{localCoMod}
\end{equation}
\end{Definition}
\noindent `Local' here means that we are considering a statistic for an individual block, out of the many blocks which are found in general along each row and column. Each of the $k^{(X)}\times k^{(Y)}$ possible pairings of $\hat{g}^{(X)}$ with $\hat{g}^{(Y)}$ can be defined, or not, as a co-community; doing so means that they are included in, or excluded from, the estimated set of co-communities $\hat{C}$ (Definition \ref{coModDef}). To consider all permutations, $2^{k^{(X)}\times k^{(Y)}}$ such assignments would need to be considered, which would be computationally very demanding. However, this problem can be avoided by defining summary statistics targeted for particular purposes. The three such purposes which we consider here are described in the following subsections: \ref{compAssessSigCoCom} Comparing potential co-communities and assessing their strength; \ref{arrCoComVis} Arranging the co-communities for visualisation; \ref{DefObjFunOptCoCom} Defining an algorithmic objective function to be optimised, when determining co-community partitions.

\subsection{Comparing and assessing significance of co-communities}\label{compAssessSigCoCom}
Under a null model of no co-community structure, $\theta_{z^{(X)}(i),z^{(Y)}(j)}=\text{constant}$, for all $i, j$. Therefore, referring to the log-linear model \citep{perry2012null}, equation \ref{DCSCBM} becomes:
\begin{equation}
A_{ij}\sim\operatorname{Bernouilli}\left(\frac{{\pi}^{(X)}_i{\pi}^{(Y)}_j}{\pi^{++}}\right)\label{DCSCBMnull},
\end{equation}
where we have defined:
\begin{equation}
\theta_{z^{(X)}(i),z^{(Y)}(j)}=1/\pi^{++}. \label{nullAssump}
\end{equation}
Hence under this null,
\begin{equation*}
\mathbb{E}\left(A_{ij}\right)=\frac{{\pi}^{(X)}_i{\pi}^{(Y)}_j}{{\pi}^{++}},
\end{equation*}
which implies that for large networks which are not too sparse, $\mathbb{E}\left(B_{ij}\right)$ is nearly zero. We define the informal idealised quantities $\widetilde{\mathbf{B}}$ and $\widetilde{Q}_{XY}$ in comparison with equations \ref{Bdef} and \ref{localCoMod}:
\begin{equation}
\widetilde{\mathbf{B}}=\mathbf{A}-\frac{1}{{\pi}^{++}}{\boldsymbol{\pi}}^{(X)}\left({\boldsymbol{\pi}}^{(Y)}\right)^{\top},
\end{equation}
and
\begin{equation}
\widetilde{Q}_{XY}\left(\hat{g}^{(X)},\hat{g}^{(Y)}\right)=\frac{1}{{\pi}^{++}}\sum_{i\in \hat{g}^{(X)}}\sum_{j\in \hat{g}^{(Y)}}\widetilde{B}_{ij},\label{theorLocalCoMod}
\end{equation}
where the empirical degree distributions $\mathbf{d}^{(X)}$ and $\mathbf{d}^{(Y)}$ have been replaced by the theoretical node connectivity parameters $\boldsymbol{\pi}^{(X)}$ and $\boldsymbol{\pi}^{(Y)}$, and the empirical normalisation factor $d^{++}$ is also replaced by the theoretical normalisation factor $\pi^{++}$. 

If the pairing of $X$ and $Y$-node groupings $\hat{g}^{(X)}$ and $\hat{g}^{(Y)}$ exhibit some co-community structure, then equation \ref{nullAssump} no longer holds, and so the null model does not hold either. The stronger this co-community structure is, the further we move from the null model, and the greater $\theta$ becomes relative to $1/\pi^{++}$. This corresponds to $\mathbb{E}\left(A_{ij}\right)$ becoming larger than $\pi^{(X)}_i\pi^{(Y)}_j/\pi^{++}$, which is equivalent to the observed number of edges in the co-community becoming greater than the expected, under the null of no co-community structure. This in turn means that $\widetilde{Q}_{XY}$ also becomes more positive. In other words, the further we move from the null model, the greater tendency of the $X$-nodes and $Y$-nodes of these groups to form connections with one another (compared with their expected propensity to make connections with any nodes, of the opposite type), and therefore constitute a strong co-community. Hence, a parsimonious method of comparing potential co-communities is simply to compare their local co-modularity, $Q_{XY}\left(\hat{g}^{(X)},\hat{g}^{(Y)}\right)$. This naturally leads to a ranking of potential co-communities according to their strength. e.g., leading to identification of the shaded blocks in Fig.\ref{coComBaseEg} by some thresholding criterion, such as statistical significance.

An estimate of statistical significance of a potential co-community can also be made, as follows. Noting that, with adjacency matrix $\mathbf{A}$ defined according to the Bernoulli distribution of Definition \ref{mainModelDef}, with fixed $\theta_{z^{(X)}(i),z^{(Y)}(j)}=1/{\pi}^{++}$,
\begin{equation*}
\mathrm{Var}\left(\widetilde{B}_{ij}\right)=\mathrm{Var}\left(A_{ij}\right)=\left(\frac{{\pi}^{(X)}{\pi}^{(Y)}}{{\pi}^{++}}\right)\left(1-\frac{{\pi}^{(X)}{\pi}^{(Y)}}{{\pi}^{++}}\right),
\end{equation*}
and assuming probabilities of observing links between different pairs of nodes are independent, the variance of $\widetilde{Q}_{XY}\left(\hat{g}^{(X)},\hat{g}^{(Y)}\right)$ can be approximated (for deterministic node-groupings) as:
\begin{equation}
\mathrm{Var}\left(\widetilde{Q}_{XY}\left(\hat{g}^{(X)},\hat{g}^{(Y)}\right)\right)=\frac{1}{\left({\pi}^{++}\right)^2}\sum_{i\in \hat{g}^{(X)}}\sum_{j\in \hat{g}^{(Y)}}\left(\frac{{\pi}^{(X)}_i{\pi}^{(Y)}_j}{{\pi}^{++}}\right)\left(1-\frac{{\pi}^{(X)}_i{\pi}^{(Y)}_j}{{\pi}^{++}}\right),\label{coModVarEst}
\end{equation}
where the factor $1/\left({\pi}^{++}\right)^2$ is due to the factor $1/\left({\pi}^{++}\right)$ in equation \ref{theorLocalCoMod}. We also note that here, departing from convention, the $\pi^{(X)}_i$ and $\pi^{(Y)}_j$ are of the scale of degrees rather than probabilities (Definition \ref{mainModelDef}). Hence, assuming $\mathbf{d}^{(X)}\overset{p}{\to}{\boldsymbol{\pi}}^{(X)}$, $\mathbf{d}^{(Y)}\overset{p}{\to}{\boldsymbol{\pi}}^{(Y)}$ and $d^{++}\overset{p}{\to}{\pi}^{++}$, and assuming the potential co-community defined by $\hat{g}^{(X)}$ and $\hat{g}^{(Y)}$ is comprised of sufficiently many nodes for a Gaussian approximation to hold, we can test the significance of $Q_{XY}\left(\hat{g}^{(X)},\hat{g}^{(Y)}\right)$ with a $z$-test, with zero mean and with $\mathrm{Var}\left({Q}_{XY}\right)$ estimated as $\mathrm{Var}\left(\widetilde{Q}_{XY}\right)$ in equation \ref{coModVarEst}, also replacing ${\pi}^{(X)}_i$ with $d^{(X)}_i$, ${\pi}^{(Y)}_j$ with $d^{(Y)}_j$ and ${\pi}^{++}$ with $d^{++}$, and ignoring the stochasticity of $g^{(X)}$ and $g^{(Y)}$. A pairing $\hat{g}^{(X)}_p$ and $\hat{g}^{(Y)}_q$ is then defined as a co-community $\hat{c}$ and included in $\hat{C}$ (Definition \ref{coModDef}), i.e., $\left\{p,q\right\}=\hat{c}\in\hat{C}$, if and only if this pairing $\hat{g}^{(X)}_p$ with $\hat{g}^{(Y)}_q$ is significant according to this $z$-test, at some significance level. We note that, in practice, this is only a rough approximation of significance, also because by specifying in advance the co-community node-groupings $\hat{g}^{(X)}$ and $\hat{g}^{(Y)}$, we have introduced dependencies between the $X$ and $Y$-nodes of this co-community.

\subsection{Arranging the co-communities for visualisation}\label{arrCoComVis}
A standard task in exploratory data analysis using variants of the stochastic block model, is arranging the detected communities so they can be visualised in a helpful way. This visualisation is usually carried out by way of a heatmap representation of the adjacency matrix with the nodes grouped into communities. In the symmetric/unipartite community detection scenario, the communities occur along the leading diagonal of this ordered adjacency matrix. The communities themselves are often ordered along the leading diagonal according to their edge densities. In the bipartite co-community detection setting, co-communities may be present away from the leading diagonal, and there is no longer a restriction on how many co-communities a node may be part of - although we do not consider here the possibility of overlapping co-communities. 

We propose then, that once the $X$-node groupings and $Y$-node groupings have been determined by spectral clustering as described above, a natural way to order these groups with respect to one another, is via row and column co-modularities, which we define as follows.
\begin{Definition}\label{rcCoModDef}
With $d^{++}$ given by Definition \ref{coModDef}, and with $\mathbf{B}$ given by Definition \ref{localCoModDef}, with with the set of $X$-node groupings and the set of $Y$-node groupings estimated according to Algorithm \ref{specClustAlg} as $\hat{G}^{(X)}$ and $\hat{G}^{(Y)}$ respectively, the row and column modularities $Q_{row}\left(\hat{g}^{(X)}\right)$ and $Q_{column}\left(\hat{g}^{(Y)}\right)$ are defined, for $\hat{g}^{(X)}\in\hat{G}^{(X)}$ and $\hat{g}^{(Y)}\in\hat{G}^{(Y)}$, as:
\begin{equation}
Q_{row}\left(\hat{g}^{(X)}\right)=\sum_{\hat{g}^{(Y)}\in\hat{G}^{(Y)}}\left|\frac{1}{d^{++}}\sum_{i\in \hat{g}^{(X)}}\sum_{j\in \hat{g}^{(Y)}}B_{ij}\right| \label{rowMod}
\end{equation}
and
\begin{equation}
Q_{column}\left(\hat{g}^{(Y)}\right)=\sum_{\hat{g}^{(X)}\in\hat{G}^{(X)}}\left|\frac{1}{d^{++}}\sum_{i\in \hat{g}^{(X)}}\sum_{j\in \hat{g}^{(Y)}}B_{ij}\right|. \label{columnMod}
\end{equation}
\end{Definition}
\noindent Considering the absolute values of the local co-modularities in these sums serves to prioritise the most extreme choices of divisions of nodes into co-communities, according to their local co-modularities. On the other hand if absolute values were not considered here, the row and column modularities would always be zero, because the rows and columns of $\mathbf{B}$ must always sum to zero. We note that we could have chosen to use squared instead of absolute values, however we choose to use absolute values so as not to give extra weight to a few extreme values. The row and column co-modularities are the sums, respectively, of the absolute values of the local co-modularities along the rows and columns respectively, of the ordered adjacency matrix. Hence, they represent a measure of how extreme the co-community divisions are, in each row and column, according to the groupings defined by $\hat{G}^{(X)}$ and $\hat{G}^{(Y)}$. By ordering the $X$-node and $Y$-node groupings by decreasing $Q_{row}\left(\hat{g}^{(X)}\right)$ and $Q_{column}\left(\hat{g}^{(Y)}\right)$ respectively, co-communities with the largest local co-modularities will tend to congregate towards the top-left of the ordered adjacency matrix. This is a natural arrangement for visualisation as a heatmap, because it tends to place the strongest co-communities together in this corner, and so the attention is intuitively drawn to this region. 

We note that there may be other equally effective ways of arranging the adjacency matrix for visualisation as a heatmap. However, this method is effective, and it is a parsimonious solution in the context of co-modularity, because row and column modularities are very simply and intuitively related to local co-modularity. In the case that there is no co-community structure present, such as under the null model of equation \ref{DCSCBMnull}, then $Q_{row}$ and $Q_{column}$ as defined in Definition \ref{rcCoModDef} would also tend to be close to zero, and the ordering would cease to be meaningful. However, if there are even a few significant co-communities present, their corresponding $X$ and $Y$-node groupings $\hat{g}^{(X)}$ and $\hat{g}^{(Y)}$ would stand out, as assessed by $Q_{row}$ and $Q_{column}$. Therefore these $\hat{g}^{(X)}$ and $\hat{g}^{(Y)}$ would be placed at the top of the respective orderings, with the co-community pairings tending towards in the top-left corner. The other rows and columns, which do not contain significant co-communities, would have corresponding $Q_{row}$ and $Q_{column}$ close to zero. Hence, these  rows and columns would be naturally ordered according to their irrelevance. They would accordingly be placed further away from the top-left of the heatmap, giving the intuition that they are unimportant.

\subsection{Defining an objective function for optimising the co-community partitions}\label{DefObjFunOptCoCom}
Defining an objective function over the whole network, in terms of the  assignments of the nodes to $X$-node and $Y$-node groupings $\hat{g}^{(X)}$ and $\hat{g}^{(Y)}$, allows optimisation of these node assignments. It also provides a means of comparison of algorithmic parameters and other design choices in the practical implementation of the methods. It would be most ideal, for a trial assignment of nodes to $\hat{G}^{(X)}$ and $\hat{G}^{(Y)}$, to estimate the set of co-communities $\hat{C}$ using the method of Section \ref{compAssessSigCoCom}, and then to calculate the co-modularity according to Definition \ref{coModDef}. However, for a large number of repetitions within an algorithm, or for an iterative search and optimisation, this would be computationally inefficient. Instead, we define the global co-modularity to be used as an objective function for such purposes, as follows:
\begin{Definition}\label{globalCoModDef}
With $d^{++}$ given by Definition \ref{coModDef}, and with $\mathbf{B}$ given by Definition \ref{localCoModDef}, with with the set of $X$-node groupings and the set of $Y$-node groupings estimated according to Algorithm \ref{specClustAlg} as $\hat{G}^{(X)}$ and $\hat{G}^{(Y)}$ respectively, the global co-modularity is defined, for $\hat{g}^{(X)}\in\hat{G}^{(X)}$ and $\hat{g}^{(Y)}\in\hat{G}^{(Y)}$, as:
\begin{equation}
Q_{global}=\sum_{\hat{g}^{(Y)}\in\hat{G}^{(Y)}}\sum_{\hat{g}^{(X)}\in\hat{G}^{(X)}}\left|\frac{1}{d^{++}}\sum_{i\in \hat{g}^{(X)}}\sum_{j\in \hat{g}^{(Y)}}B_{ij}\right|. \label{globalMod}
\end{equation}
\end{Definition}
\noindent For a pairing $\hat{g}^{(X)}$ and $\hat{g}^{(Y)}$, the local co-modularity $Q_{XY}\left(\hat{g}^{(X)},\hat{g}^{(Y)}\right)$ represents the strength of the co-community structure in that grouping of $X$-nodes and $Y$-nodes. If the absolute value was not considered in the sum, $Q_{global}$ would always be zero. Hence, by prioritising a sum of the absolute values of the local co-modularity of all pairings $\hat{g}^{(X)}$ with $\hat{g}^{(Y)}$, we prioritise an extreme division of the $X$-nodes and $Y$-nodes into co-communities, as measured by the local co-modularity. This therefore corresponds to an extreme partition in terms of co-community structure, as assessed by co-modularity. 

Spectral clustering usually requires the nodes to be grouped in the spaces of the top singular vectors of the co-Laplacian, and this grouping is often carried out by $k$-means, as described in Algorithm \ref{specClustAlg}. Because $k$-means optimisation is not convex, the converged result may be a local optimum. Hence, implementations of $k$-means often begin at a random start-point, with the optimisation run several times from random start-points, choosing the result which is in some sense optimal. In the community-detection setting, a natural statistic to maximise in this optimisation is the Newman-Girvan modularity. An equivalent statistic here to maximise in this co-community detection setting is hence the global co-modularity, which is intuitively linked to the local co-modularity measure of co-community structure. In the community-detection setting, assignments to communities can also be optimised by carrying out node-swapping between communities, in order to maximise the Newman-Girvan modularity \citep{blondel2008fast}. The global co-modularity is a statistic which could be equivalently maximised, in this co-community detection setting.

\section{Examples}\label{examples}
In this section, we present results of applying the proposed methodology to simulated data, and to real data relating to movie reviews and to genomic patterns. We fit the degree-corrected stochastic co-blockmodel by spectral clustering as described in Section \ref{coModCoComDet}, with additional practical details as described below. As noted in Section \ref{introSect}, methodology to determine the optimal number of clusters is a challenging problem, worth studying independently of methodology to determine the clusters themselves \citep{tibshirani2001estimating}. Therefore to enable us to properly evaluate our proposed methodology for detecting co-communities (as illustrated by the shaded blocks in Fig.\ref{coComBaseEg}), we use the ground-truth cluster numbers defined in the simulation study to focus on testing whether we can recover the planted co-communities. Then in Section \ref{movieSect}, we are limited practically in how many clusters we should aim to detect by the granularity of the ground-truth (i.e., node covariate information) that is available. Again, our primary aim is to assess the proposed methodology for detecting co-communities: in this example, a co-community represents a group of similar movie fans who like a set of similar movies. We also note that Appendix C provides a practical method with theoretical justification for estimating the optimal number of $X$ and $Y$-node groupings $k^{(X)}$ and $k^{(Y)}$, from which co-communities can be identified. 

\begin{wrapfigure}{r}{0.5\textwidth}
\vspace{-2ex}
\begin{center}
\includegraphics[width=0.48\textwidth]{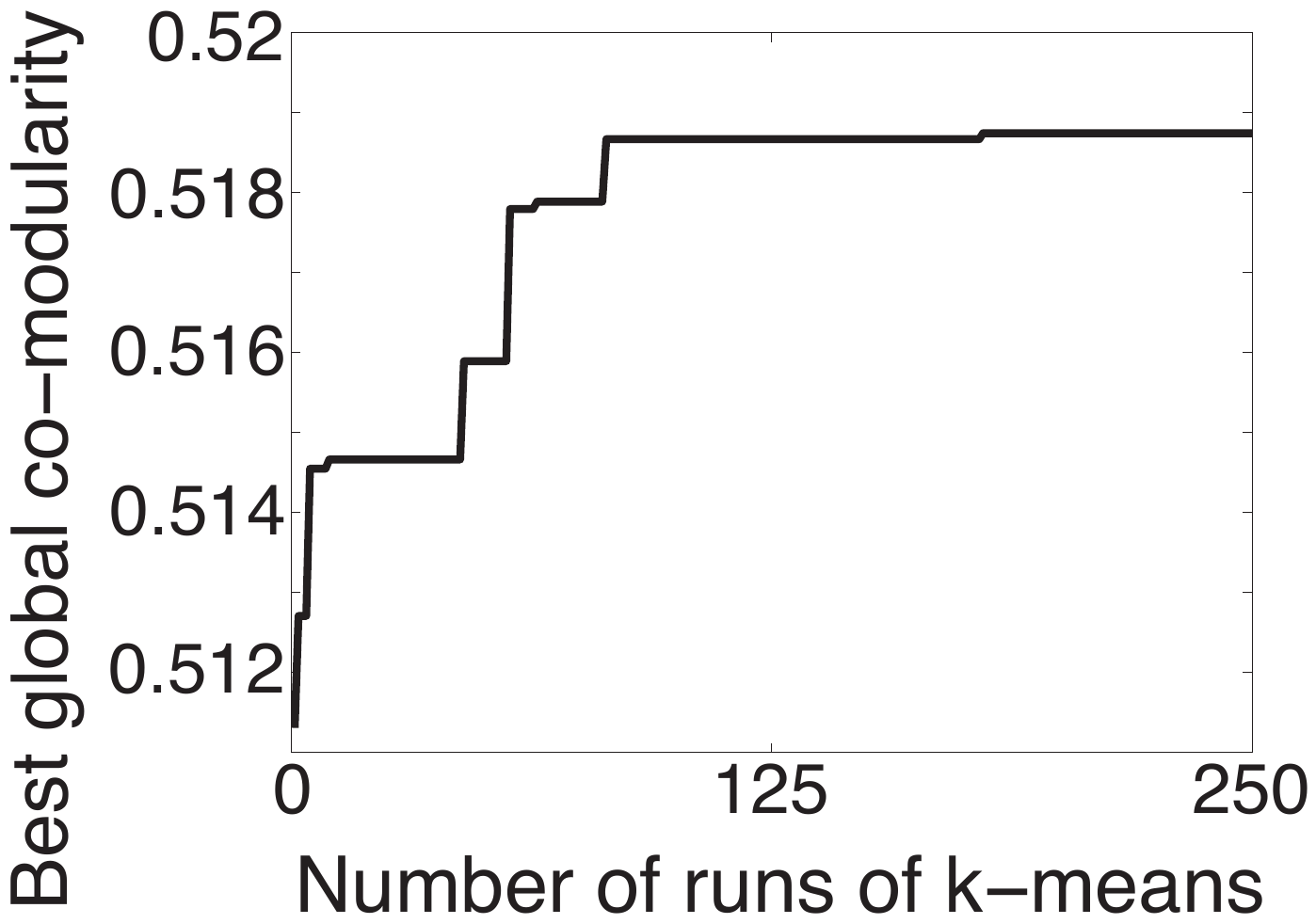}
\end{center}
\vspace{-2ex}
\caption{Convergence of the co-modularity.} \label{convergence}
\caption*{The co-modularity converges well to a maximum, within 250 runs of $k$-means, in the genomics data set. For reference, the co-modularity is consistently found to be 0 when calculated based on randomly assigned co-community partitions of similar size.}
\vspace{2ex}
\end{wrapfigure}
In the context of community detection, fitting the degree-corrected stochastic blockmodel using spectral clustering, when calculating the Laplacian it is advantageous to slightly inflate the degree distribution (regularisation) \citep{qin2013regularized}, a trick which made Google's original page-rank algorithm \citep{page1999pagerank} so effective in web-searching. Here in the co-community detection setting, correspondingly when calculating the co-Laplacian (equation \ref{coLaplacian}), we inflate the diagonals of $\textbf{D}^{(X)}$ and $\textbf{D}^{(Y)}$ by the medians of $\mathbf{d}^{(X)}$ and $\mathbf{d}^{(Y)}$, respectively. Further, when fitting variants of the stochatic blockmodel by spectral clustering with $k$-means, nodes with small leverage score (which are usually low-degree nodes) can be excluded from the $k$-means step \citep{qin2013regularized}; this practice is also followed here. We note that these regularisation steps have not previously been carried out in this co-community detection / co-clustering setting. We also note that while spectral clustering is in general computationally intensive, binary adjacency matrices such as those dealt with in this setting tend to be very sparse. Further, we only require $k=\mathrm{Max}\left(k^{(X)},k^{(Y)}\right)$ components of the singular value decomposition, a number which tends to be two or more orders of magnitude smaller than the maximum dimension of the adjacency matrix. Efficient computational methods exist to find the top few components in the singular value decomposition of large sparse matrices \citep{sorensen1992implicit,lehoucq1996deflation}, with implementations in \textit{Matlab} and \textit{R}, meaning that these methods are easy to implement and practical for large networks. The $k$-means clustering algorithm begins with a random start-point, and hence it can provide a different result each time it is run. We therefore run the $k$-means step in the spectral clustering several times, choosing the result which maximises the global co-modularity (equation \ref{globalMod}). We run $k$-means repeatedly until the output is visually assessed to have stabilised, at which point it can be seen from the convergence plot that there is very little, if any, improvement in co-modularity achieved by further runs of $k$-means. An example of such convergence in the genomics data set presented in Section \ref{IGVanalysisSection} is shown in Figure \ref{convergence}.

\subsection{Simulation Study}
We carried out a simulation study, to evaluate the effectiveness of this co-community detection methodology against generated networks with known ground-truth co-communities. A classic generative model for exchangeable random networks with heterogenous degrees is the logistic-linear model \citep{perry2012null}. We use a version here for bipartite networks, with additional co-community structure, defined as:
\begin{equation}
\mathrm{Logit}\left(p_{ij}\right)=\alpha^{(X)}_i+\alpha^{(Y)}_j+\theta_{ij},\label{genModelEq}
\end{equation}
where $p_{ij}$ defines the probability of an edge being observed between nodes $i$ and $j$. We choose to use this model, because the parameters can take any real values, and the edge probabilities $p_{ij}$ will still be between 0 and 1. This model only deviates from the equivalent log model when the parameter values become very large, which is what prevents $p_{ij}$ from reaching (and exceeding) 1 \citep{perry2012null}. Further, the blockmodel approximates any smooth function, and hence the model can be used purely in the sense of approximation \citep{olhede2014net,choi2014co}. The node-specific parameters $\alpha^{(X)}_i$ and $\alpha^{(Y)}_j$ are elements of parameter vectors $\boldsymbol{\alpha}^{(X)}$ and $\boldsymbol{\alpha}^{(Y)}$ which define degree distributions for the $X$ and $Y$-nodes. We choose power-law degree distributions for the nodes, because this is a characteristic of scale-free networks \citep{barabasi2004network}, which are found to be physically realistic in a wide range of scenarios, including biological networks \citep{wagner2002estimating}, and social networks \citep{barabasi1999emergence}. We note that although power-law degree distributions are not found in exchangeable networks, blockmodels (such as the model presented here) are still good at approximating the propensity for connections within the network \citep{borgs2015consistent}. The parameters $\alpha^{(X)}_i$ and $\alpha^{(Y)}_j$ are each generated as the logarithms of samples taken from a bounded Pareto distribution as in \citep{olhede2012degree}. We note that because $\alpha^{(X)}_i$ and $\alpha^{(Y)}_j$ are chosen to be random, our generated networks are exchangeable \citep{kallenberg2005probabilistic}, whereas if $\alpha^{(X)}_i$ and $\alpha^{(Y)}_j$ were defined deterministically, these networks would instead be generated under the inhomogenous random graph model \citep{bollobas2007phase,soderberg2002general}. The co-community parameter $\theta_{ij}$ is allowed to take two values: $\theta_{ij}=\theta_{\mathrm{in}}$ if $i$ and $j$ are in the same co-community, and $\theta_{ij}=\theta_{\mathrm{out}}$ otherwise, which is equivalent to the modelling constraint we applied in equation \ref{poissonCons}. After generating the $p_{ij}$ according to \ref{genModelEq}, the network is generated by sampling each $A_{ij}$,
\begin{equation*}
A_{ij}\sim\mathrm{Bernouilli}\left(p_{ij}\right).
\end{equation*}

The co-communities themselves are planted in the network as randomly chosen groups of 150 of each type of node. We note that some nodes may not lie in any co-community: for nodes not in a co-community, the edge probability is regulated by $\theta_{\mathrm{out}}$, and similarly by $\theta_{\mathrm{in}}$ for two nodes that are in a co-community. The maximum number of co-communities is set at $k^{(X)}\times k^{(Y)}$, and by analogy with the unipartite/symmetric community detection setting, we choose to set the number of co-communities $T$ as the square-root of this theoretical maximum, $T=\sqrt{k^{(X)}\times k^{(Y)}}$. As discussed in Section \ref{idenCompCoCom}, in the unipartite community detection setting there is a constraint on the number of communities, $k=k^{(X)}=k^{(Y)}$, because the $X$-node and $Y$-node groupings are the same thing. This constraint does not exist in the bipartite co-community detection setting, and so the theoretical maximum number of co-communities is $k^{(X)}\times k^{(Y)}$, i.e., the square of the number of communities in the equivalent symmetric community detection setting. However, we expect the number of co-communities to be significantly less than this in practice, and so by default, we choose $T=\sqrt{k^{(X)}\times k^{(Y)}}$ as the number of co-communities, although we note that many other choices would also be valid here. 

We test the methods on networks generated with $k^{(X)}$ and $k^{(Y)}$ ranging from 8 and 6 respectively up to 80 and 60 respectively (corresponding to values of numbers of nodes, $m$ and $l$, ranging from 1200 and 900 up to 12000 and 9000, respectively). We also test the methods on networks generated with values of $\theta_{\mathrm{in}}$ from 10 to 50, which corresponds to within co-community edge density $\rho_\mathrm{in}\in\{0.039,0.15,0.34,0.6\}$, and we set $\theta_{\mathrm{out}}=1$, corresponding to outside or between co-community edge density $\rho_\mathrm{in}=0.0013$ (N.B., $\theta_{\mathrm{in}}$ and $\theta_{\mathrm{out}}$ are not probabilities, equation \ref{genModelEq}). For each combination of parameters, we carry out 50 repetitions of network generation and co-community detection, to enable assessment of the variability of the accuracy of the co-community detection (with more repetitions, the computational cost becomes prohibitive). After generating the networks, we detect co-communities according to the methods described above, based on the same values of $k^{(X)}$ and $k^{(Y)}$ that we used to generate the networks. We keep these values the same, to understand specifically how the co-community detection methodology is working, as discussed in more detail in Section \ref{introSect}. This means there are $k^{(X)}\times k^{(Y)}$ potential co-communities, and we assess each in terms of strength and significance, as discussed in Section \ref{compAssessSigCoCom}. Hence, we define the estimated set of co-communities $\hat{C}$, as all combinations of detected $X$ and $Y$-node groupings $\hat{g}^{(X)}\in\hat{G}^{(X)}$ with $\hat{g}^{(Y)}\in\hat{G}^{(Y)}$ which are significant according to a $z$-test with zero mean and variance calculated as in Equation \ref{coModVarEst}. We define significance according to FDR (false discovery rate) corrected \citep{benjamini1995controlling} $p\mathrm{-value}<0.05$. This tends to result in more co-communities being detected than were originally planted (primarily due to some being split), however we note that the main aim of this methodology is to find a good representation of the underlying co-community structure (as assessed by co-modularity), rather than to reproduce it exactly.

To compare detected co-communities with the ground-truth planted co-communities, we use the normalised mutual information (NMI) \citep{danon2005comparing} to compare the corresponding $X$ and $Y$-node groupings (over the full node-sets). The NMI compares the numbers of nodes which appear together in the discovered $X$ and $Y$-node groupings, compared with whether they appeared together in those that correspond to the planted co-communities (adjusted for group sizes). NMI has been used in a similar way previously in the co-community detection context \citep{larremore2014efficiently}, as well as the unipartite community-detection context \citep{zhao2012consistency}. The NMI takes the value 1 if the  $X$ and $Y$-node groupings that correspond to these co-communities are perfectly reproduced in the co-community detection, and 0 if they are not reproduced at all, and somewhere in between if they are partially reproduced. The results, together with examples of randomly generated adjacency matrices, are shown in Figure \ref{simRes}, which shows that the method performs well as long as there is sufficient within-co-community edge density (implying a detection threshold), and performs well as the number of co-communities increases. The run-times for the fits for number of row nodes equal to 1200, 4800, 8400 and 12000 (with number of column nodes equal to 900, 3600, 8400 and 9000) are 1.6s, 21s, 80s and 190s, respectively (using one core of a MacBook Pro, 2019, 2.6 GHz).  
\begin{figure}[ht!]
\vspace{-1ex}
\includegraphics[width=1\textwidth]{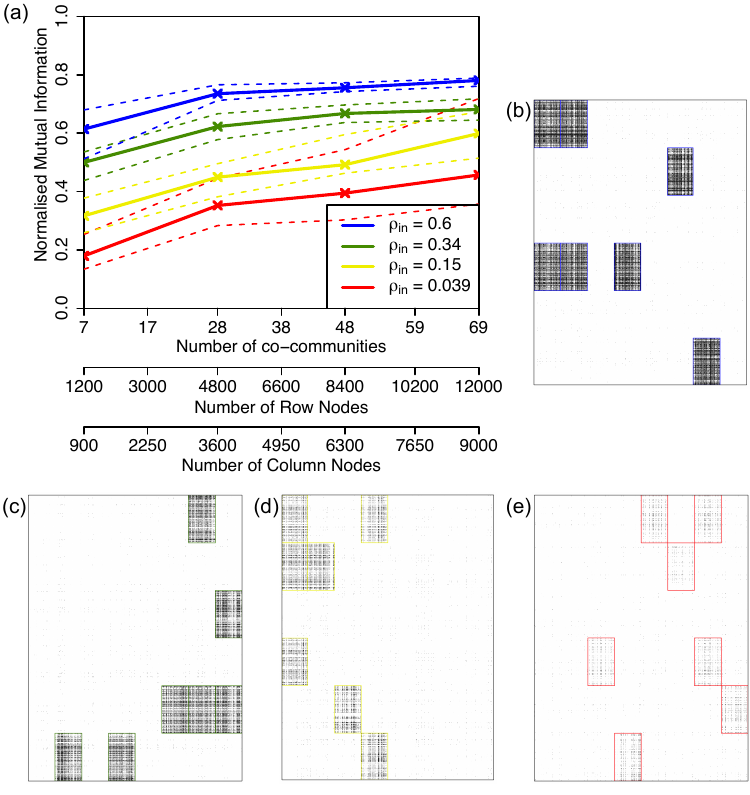}
\caption{Simulation Study.} \label{simRes}
\caption*{(a) Normalised mutual information (NMI) compares detected co-communities with ground-truth planted co-communities (dashed lines indicate 95\% C.I.) (b)-(e) Examples of generated networks all with $nR=1200$, $nC=900$, $kR=8$, $kC=6$, and 7 planted co-communities; entries in the adjacency matrix equal to 1 (representing a network edge) are marked in black; planted co-communities are outlined in colour. (b) $\theta_{\mathrm{in}}=40$, within-community edge density $\rho_\mathrm{in}=0.6$; (b) $\theta_{\mathrm{in}}=30$, $\rho_\mathrm{in}=0.34$; (c) $\theta_{\mathrm{in}}=20$, $\rho_\mathrm{in}=0.15$; (d) $\theta_{\mathrm{in}}=10$, $\rho_\mathrm{in}=0.039$. For all networks, $\theta_{\mathrm{out}}=1$, outside/between co-community edge density $\rho_\mathrm{out}=0.0013$}
\end{figure}

\vspace{-1ex}
\subsection{Genomics Data-set}\label{IGVanalysisSection}
We present an example of a practical application of these methods to a challenging problem in genomics. A gene encodes how to make a gene product, and the corresponding gene expression level quantifies how much gene product is currently being produced. Hence, the gene expression level indicates the extent to which a gene is `active', or `switched on'. DNA methylation is a gene regulatory pattern, meaning that it influences the activity/expression level of particular genes. DNA methylation patterns are themselves influenced by the expression levels/activity of other genes. However, much is still unknown about the interaction between DNA methylation patterns and gene expression patterns \citep{jones2012functions}. It is of much interest to uncover groups of genes with methylation patterns which are linked to the expression patterns of other groups of genes, to allow biological hypotheses to be formed, which can then be investigated further, experimentally and computationally. Hence, this is a natural scenario to be approached with co-community detection, as the method offers the potential to uncover latent structure not easily identifiable otherwise. 

\begin{figure}[!ht]
\includegraphics[width=0.95\textwidth]{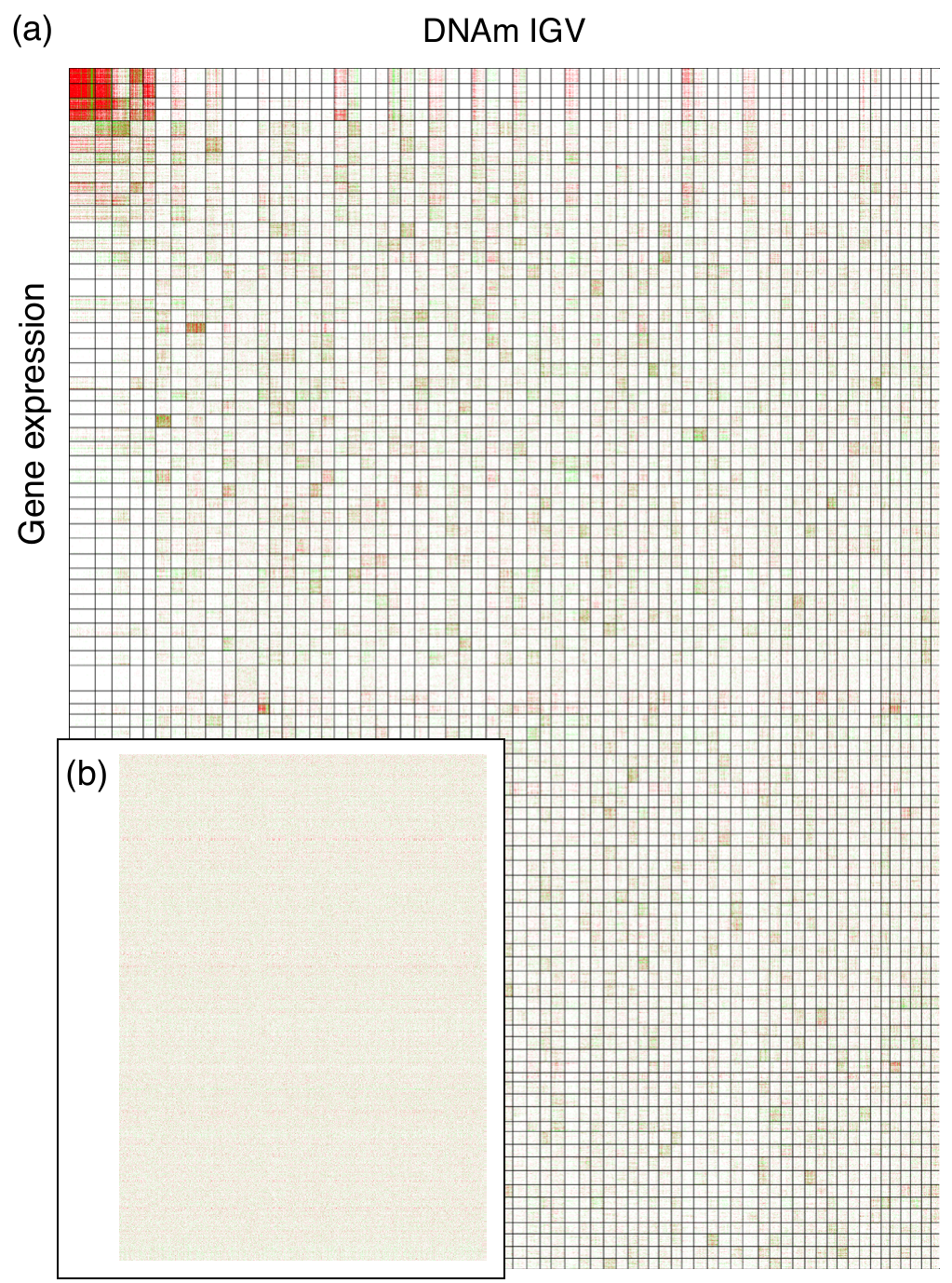}
\caption{Co-communities in the genomics data set.} \label{exprCorIGV}
\caption*{(a) Genes are ordered along the margins of the adjacency matrix, according to co-communities detected by the methods presented here. Partitions between detected co-communities are shown with black lines. (b) The same adjacency matrix ordered along its margins alphabetically by gene name, i.e., without ordering the margins using co-community detection. Entries in the adjacency matrix equal to 1 (representing a network edge) are coloured, with green and red indicating positive and negative associations, respectively. We note that this is a signed network \citep{leskovec2010signed}. N.B., this colour-scheme contrasts with that used in Figure \ref{movieLens}, in which network edges do not have associated signs, and hence are all coloured blue.}
\vspace{-5ex}
\end{figure}

As a measure of the DNA methylation (DNAm) pattern of each gene, we choose to consider here intra-gene DNA methylation variability (IGV), as it is a per-gene measure of DNA methylation variance which has been shown to be strongly associated with disease \citep{bartlett2013corruption,bartlett2015intra}. We denote the gene expression variables $X(i)$, $i=1,...,m$ and the DNAm variables $Y(j)$, $j=1,...,l$; i.e., $X(i)$ and $Y(j)$ refer to the measurements for particular genes of gene expression and DNA methylation IGV respectively. We define a network edge $A_{ij}=1$ if variables $X(i)$ and $Y(i)$ are significantly correlated, and we set $A_{ij}=0$ otherwise \citep{bartlett2016network}. We carried out co-community detection on this genomics data set according to the methods described above (data source: The Cancer Genome Atlas \citep{hampton2006cancer}, breast cancer invasive carcinoma data set, basal tumour samples only). Figure \ref{exprCorIGV}(a) shows the adjacency matrix after carrying out co-community detection, ordering the $X$ and $Y$-node groupings by row and column co-modularity (equations \ref{rowMod} and \ref{columnMod}). Figure \ref{exprCorIGV}(b) (inlay) shows the same adjacency matrix ordered along its margins alphabetically by gene name, i.e., without ordering the margins using co-community detection. Hence, Figure \ref{exprCorIGV}(b) shows a baseline in which the nodes are essentially randomly ordered, against which to compare the adjacency matrix after co-community detection, and ordering based upon it. The co-community structure is clearly revealed in Figure \ref{exprCorIGV}(a), whereas no co-community structure is visible in Figure \ref{exprCorIGV}(b). We define a co-community $\hat{c}\in\hat{C}$ as a combination of $X$-node grouping  $\hat{g}^{(X)}\in\hat{G}^{(X)}$ with $Y$-node grouping  $\hat{g}^{(Y)}\in\hat{G}^{(Y)}$ which is significant according to a $z$-test with zero mean and variance calculated as in equation \ref{coModVarEst}, with significance defined by FDR-corrected $p\mathrm{-value}<0.05$. The numbers of $X$ and $Y$-node groupings, $k^{(X)}$ and $k^{(Y)}$, are estimated according to equations \ref{estKx} and \ref{estKy} as 89 and 67 respectively, leading to 5963 potential co-communities, of which $\hat{T}=2018$ are found to be significant. We tested these 2018 significant co-communities for domain relevance, by comparing the overlap of the genes (nodes) of each co-community, separately with each of $10295$ known gene-groups (data source: \allowbreak\textit{http://www.broadinstitute.org/gsea/msigdb/}). This type of analysis is often called `gene set enrichment analysis' (GSEA) \citep{subramanian2005gene}. We found that 1340 (66\%) overlap significantly (Fisher's exact test, FDR-adjusted $p<0.05$) with these known gene-sets, confirming the domain relevance of this result, as well as indicating novel findings which could be investigated further by experimental biologists. 

\subsection{Movie-review Data-set}\label{movieSect}
We present a second, contrasting example of a practical application of these methods to real data, to a consumer-product review dataset. We downloaded movie review data from the \textit{Movie Lens} database, which details $1\ 000\ 209$ reviews of 3952 different movies, by 6040 unique users who each provided at least 20 different reviews (data source: \textit{http://grouplens.org/datasets/movielens/}). Denoting movies by the variables $X(i)$, $i=1,...,m$ and users by the variables $Y(j)$, $j=1,...,l$, we define a network edge, i.e., $A_{ij}=1$, if movie $X(i)$ has been reviewed by user $Y(i)$, and no edge, i.e., $A_{ij}=0$, otherwise. Covariate information is also available, assigning each movie to one of 18 categories, and classifying each user into one of 7 age groups and 20 professions.

As discussed in Section \ref{introSect}, determining the optimal number of clusters is a different problem than determining the clusters themselves. In this example, the granularity of the available ground-truth clusters (i.e., the covariate information we have available for verification of detected co-clusters) is very much less than that estimated according to equations \ref{estKx} and \ref{estKy} (i.e., 77 and 95 respectively). To make the comparison with the available ground-truth straightforward, we choose to use a level of granularity of $k^{(X)}=10$ and $k^{(Y)}=15$ that is in line with the available ground-truth. This is well justified theoretically, as follows. The graphon function \citep{wolfe2013nonparametric} regulates the smoothness of the success probabilities generating the edges of the network, and so if the success probabilities are changing rapidly across nodes, then we need to use more blocks, or communities - in doing so, we are ensuring that even the roughest portion of the graphon function are sufficiently well resolved. However, we can also reduce granularity by reducing the number of blocks, whilst accepting that we will lose the ability to resolve some portions of the graphon function well. This point is also discussed further in Appendix C.

\begin{figure}[!ht]
\includegraphics[width=0.95\textwidth]{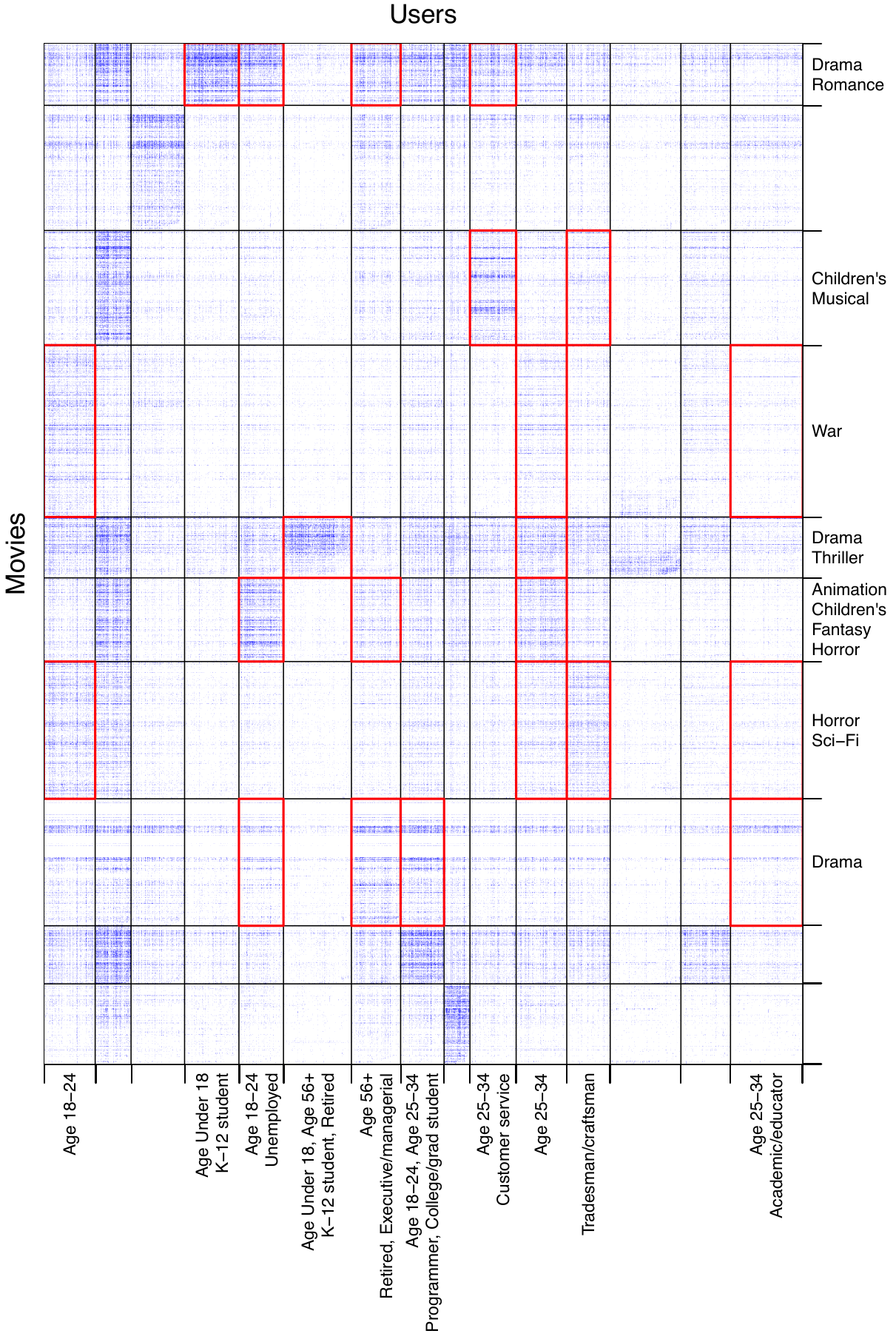}
\caption{Co-communities in the movie-review data set.} \label{movieLens}
\caption*{Entries in the adjacency matrix equal to 1 (representing a network edge) are coloured blue, and detected communities are outlined in black.}
\vspace{-8ex}
\end{figure}

Co-community detection was carried out using the methods described above, and Figure \ref{movieLens} shows the result. Of the 150 potential co-communities, $\hat{T}=41$ are found to be significant using the $z$-test as before, and are thus defined as co-communities. The $X$ and $Y$-nodes of these 41 co-communities are tested for overlap with the covariate groups. Of these, 22 are found to overlap significantly with one or more covariate groups, and these are highlighted in red, with the significant covariate groups shown along the margins, in Figure \ref{movieLens}. We use Fisher's exact test to assess significances, defined as FDR-corrected $p<0.05$. We note that using the Fisher test is slightly inappropriate due to the generative mechanism of the groups. However we would anticipate that any residual correlation from the group discovery is minor. Importantly though for assessment via this benchmark data-set, many of the findings are predictable: horror, sci-fi and war films tend to be watched by younger people; drama and romance are popular across the board. Others need more explanation, for example a group of children's movies and musicals tend to be reviewed by 25-34 year old customer service professionals. However, we can expect that this is a demographic group of people who tend to have younger children who they watch movies with. Other children's movies are grouped together with animation, fantasy and horror, and tend to be watched by both younger and older groups. This might reflect very broad classifications used for such movies, many of which in reality could be fairly similar. Also that these are groups of people who would tend to watch movies together. An important conclusion to draw, is that the covariate information available for this data-set appears to be of a lower granularity than the detail which can be revealed by these co-community detection methods.

\section{Conclusion}
We have introduced the notion of co-modularity based on the stochastic co-blockmodel, and have shown how it can be used to perform co-community detection in bipartite networks. We have shown how co-modularity can be used to compare co-communities, to calculate their strength and significance, and to arrange them for visualisation. We have addressed practical points about the implementation of the methodology, and have demonstrated its usefulness with a simulation study and application to two contrasting examples of real datasets, from genomics and consumer-product reviews. We note that the main aim of our method is detection of co-communities (in fact mis-specified, because we don't think that the inferred groups are perfect). An interesting extension to this methodology would be to consider overlapping blocks in the stochastic co-blockmodel, a problem which has already been successfully addressed in the context of the stochastic blockmodel for unipartite networks \citep{latouche2011overlapping}, and in co-clustering without fitting the stochastic blockmodel \citep{madeira2004biclustering}. Another interesting application would be to develop an online version of the method (i.e., which updates rather than re-computes) as a computationally efficient approach to large and growing data-sets \citep{zanghi2010strategies}. This methodology would be expected to work similarly well in many other contexts, such as interpersonal networks where the individuals are of two distinct categories, such as teachers and students, or publication networks where the two types of variables are authors and papers. This methodology could also be expected to work in even more general settings of bi-clustering or co-clustering, in which the variables being clustered together are simply correlated, rather than having any tangible interactive behaviour in the real world. These methods are based on commonly available computationally efficient methods for large sparse matrices, and perform well on large datasets, with large numbers of co-communities, often performing better than methods based on model likelihoods.

\bibliography{references}

\begin{thebibliography}{}

\bibitem[\protect\citeauthoryear{Airoldi, Blei, Fienberg, and Xing}{Airoldi
  et~al.}{2009}]{airoldi2009mixed}
Airoldi, E.~M., D.~M. Blei, S.~E. Fienberg, and E.~P. Xing 2009.
\newblock Mixed membership stochastic blockmodels.
\newblock In {\em Advances in Neural Information Processing Systems}, pp.\
  33--40.

\bibitem[\protect\citeauthoryear{Aldous}{Aldous}{1985}]{aldous1985exchangeability}
Aldous, D.~J. 1985.
\newblock {\em Exchangeability and related topics}.
\newblock Springer.

\bibitem[\protect\citeauthoryear{Amini, Chen, Bickel, and Levina}{Amini
  et~al.}{2013}]{amini2013pseudo}
Amini, A.~A., A.~Chen, P.~J. Bickel, and E.~Levina 2013.
\newblock Pseudo-likelihood methods for community detection in large sparse
  networks.
\newblock {\em The Annals of Statistics\/}~{\em 41\/}(4), 2097--2122.

\bibitem[\protect\citeauthoryear{Barab{\'a}si and Albert}{Barab{\'a}si and
  Albert}{1999}]{barabasi1999emergence}
Barab{\'a}si, A.-L. and R.~Albert 1999.
\newblock Emergence of scaling in random networks.
\newblock {\em Science\/}~{\em 286\/}(5439), 509--512.

\bibitem[\protect\citeauthoryear{Barab{\'a}si and Oltvai}{Barab{\'a}si and
  Oltvai}{2004}]{barabasi2004network}
Barab{\'a}si, A.-L. and Z.~N. Oltvai 2004.
\newblock Network biology: understanding the cell's functional organization.
\newblock {\em Nature Reviews Genetics\/}~{\em 5\/}(2), 101--113.

\bibitem[\protect\citeauthoryear{Bartlett}{Bartlett}{2017}]{bartlett2016network}
Bartlett, T.~E. 2017.
\newblock Network inference and community detection, based on covariance
  matrices, correlations and test statistics from arbitrary distributions.
\newblock {\em Communications in Statistics - Theory and Methods\/}~{\em
  46\/}(18), 9150--9165.

\bibitem[\protect\citeauthoryear{Bartlett, Jones, Goode, Fridley, Cunningham,
  Berns, Wik, Salvesen, Davidson, Trope, et~al.}{Bartlett
  et~al.}{2015}]{bartlett2015intra}
Bartlett, T.~E., A.~Jones, E.~L. Goode, B.~L. Fridley, J.~M. Cunningham, E.~M.
  Berns, E.~Wik, H.~B. Salvesen, B.~Davidson, C.~G. Trope, et~al. 2015.
\newblock Intra-gene dna methylation variability is a clinically independent
  prognostic marker in women?s cancers.
\newblock {\em PloS one\/}~{\em 10\/}(12), e0143178.

\bibitem[\protect\citeauthoryear{Bartlett, Zaikin, Olhede, West, Teschendorff,
  and Widschwendter}{Bartlett et~al.}{2013}]{bartlett2013corruption}
Bartlett, T.~E., A.~Zaikin, S.~C. Olhede, J.~West, A.~E. Teschendorff, and
  M.~Widschwendter 2013.
\newblock Corruption of the intra-gene {DNA} methylation architecture is a
  hallmark of cancer.
\newblock {\em PloS One\/}~{\em 8\/}(7), e68285.

\bibitem[\protect\citeauthoryear{Bassett, Porter, Wymbs, Grafton, Carlson, and
  Mucha}{Bassett et~al.}{2013}]{bassett2013robust}
Bassett, D.~S., M.~A. Porter, N.~F. Wymbs, S.~T. Grafton, J.~M. Carlson, and
  P.~J. Mucha 2013.
\newblock Robust detection of dynamic community structure in networks.
\newblock {\em Chaos: An Interdisciplinary Journal of Nonlinear Science\/}~{\em
  23\/}(1), 013142.

\bibitem[\protect\citeauthoryear{Benjamini and Hochberg}{Benjamini and
  Hochberg}{1995}]{benjamini1995controlling}
Benjamini, Y. and Y.~Hochberg 1995.
\newblock Controlling the false discovery rate: A practical and powerful
  approach to multiple testing.
\newblock {\em Journal of the Royal Statistical Society. Series B
  (Methodological)\/}~{\em 57\/}(1), 289--300.

\bibitem[\protect\citeauthoryear{Bhattacharya and Cui}{Bhattacharya and
  Cui}{2017}]{bhattacharya2017gpu}
Bhattacharya, A. and Y.~Cui 2017.
\newblock A gpu-accelerated algorithm for biclustering analysis and detection
  of condition-dependent coexpression network modules.
\newblock {\em Scientific Reports\/}~{\em 7\/}(1), 4162.

\bibitem[\protect\citeauthoryear{Bickel, Diggle, Fienberg, Gather, Olkin, and
  Zeger}{Bickel et~al.}{2009}]{bickel2009springer}
Bickel, P., P.~Diggle, S.~Fienberg, U.~Gather, I.~Olkin, and S.~Zeger 2009.
\newblock {\em Springer series in statistics}.
\newblock Springer.

\bibitem[\protect\citeauthoryear{Bickel and Chen}{Bickel and
  Chen}{2009}]{bickel2009nonparametric}
Bickel, P.~J. and A.~Chen 2009.
\newblock A nonparametric view of network models and newman--girvan and other
  modularities.
\newblock {\em Proceedings of the National Academy of Sciences\/}~{\em
  106\/}(50), 21068--21073.

\bibitem[\protect\citeauthoryear{Blondel, Guillaume, Lambiotte, and
  Lefebvre}{Blondel et~al.}{2008}]{blondel2008fast}
Blondel, V.~D., J.-L. Guillaume, R.~Lambiotte, and E.~Lefebvre 2008.
\newblock Fast unfolding of communities in large networks.
\newblock {\em Journal of Statistical Mechanics: Theory and Experiment\/}~{\em
  2008\/}(10), P10008.

\bibitem[\protect\citeauthoryear{Bollob{\'a}s, Janson, and
  Riordan}{Bollob{\'a}s et~al.}{2007}]{bollobas2007phase}
Bollob{\'a}s, B., S.~Janson, and O.~Riordan 2007.
\newblock The phase transition in inhomogeneous random graphs.
\newblock {\em Random Structures \& Algorithms\/}~{\em 31\/}(1), 3--122.

\bibitem[\protect\citeauthoryear{Borgs, Chayes, Cohn, and Ganguly}{Borgs
  et~al.}{2015}]{borgs2015consistent}
Borgs, C., J.~T. Chayes, H.~Cohn, and S.~Ganguly 2015.
\newblock Consistent nonparametric estimation for heavy-tailed sparse graphs.
\newblock {\em arXiv preprint arXiv:1508.06675\/}.

\bibitem[\protect\citeauthoryear{Cai, Li, et~al.}{Cai
  et~al.}{2015}]{cai2014robust}
Cai, T.~T., X.~Li, et~al. 2015.
\newblock Robust and computationally feasible community detection in the
  presence of arbitrary outlier nodes.
\newblock {\em Annals of Statistics\/}~{\em 43\/}(3), 1027--1059.

\bibitem[\protect\citeauthoryear{Choi, Wolfe, et~al.}{Choi
  et~al.}{2014}]{choi2014co}
Choi, D., P.~J. Wolfe, et~al. 2014.
\newblock Co-clustering separately exchangeable network data.
\newblock {\em The Annals of Statistics\/}~{\em 42\/}(1), 29--63.

\bibitem[\protect\citeauthoryear{Clevert, Unterthiner, Povysil, and
  Hochreiter}{Clevert et~al.}{2017}]{clevert2017rectified}
Clevert, D.-A., T.~Unterthiner, G.~Povysil, and S.~Hochreiter 2017.
\newblock Rectified factor networks for biclustering of omics data.
\newblock {\em Bioinformatics\/}~{\em 33\/}(14), i59--i66.

\bibitem[\protect\citeauthoryear{Danon, Diaz-Guilera, Duch, and Arenas}{Danon
  et~al.}{2005}]{danon2005comparing}
Danon, L., A.~Diaz-Guilera, J.~Duch, and A.~Arenas 2005.
\newblock Comparing community structure identification.
\newblock {\em Journal of Statistical Mechanics: Theory and Experiment\/}~{\em
  2005\/}(09), P09008.

\bibitem[\protect\citeauthoryear{Dhillon}{Dhillon}{2001}]{dhillon2001co}
Dhillon, I.~S. 2001.
\newblock Co-clustering documents and words using bipartite spectral graph
  partitioning.
\newblock In {\em Proceedings of the seventh ACM SIGKDD international
  conference on Knowledge discovery and data mining}, pp.\  269--274. ACM.

\bibitem[\protect\citeauthoryear{Diaconis}{Diaconis}{1977}]{diaconis1977finite}
Diaconis, P. 1977.
\newblock Finite forms of de finetti's theorem on exchangeability.
\newblock {\em Synthese\/}~{\em 36\/}(2), 271--281.

\bibitem[\protect\citeauthoryear{Flynn and Perry}{Flynn and
  Perry}{2012}]{flynn2012consistent}
Flynn, C.~J. and P.~O. Perry 2012.
\newblock Consistent biclustering.
\newblock {\em arXiv preprint arXiv:1206.6927\/}.

\bibitem[\protect\citeauthoryear{Gao, Lu, Ma, and Zhou}{Gao
  et~al.}{2016}]{gao2016optimal}
Gao, C., Y.~Lu, Z.~Ma, and H.~H. Zhou 2016.
\newblock Optimal estimation and completion of matrices with biclustering
  structures.
\newblock {\em Journal of Machine Learning Research\/}~{\em 17\/}(161), 1--29.

\bibitem[\protect\citeauthoryear{Girvan and Newman}{Girvan and
  Newman}{2002}]{girvan2002community}
Girvan, M. and M.~E. Newman 2002.
\newblock Community structure in social and biological networks.
\newblock {\em Proceedings of the National Academy of Sciences\/}~{\em
  99\/}(12), 7821--7826.

\bibitem[\protect\citeauthoryear{Gopalan and Blei}{Gopalan and
  Blei}{2013}]{gopalan2013efficient}
Gopalan, P.~K. and D.~M. Blei 2013.
\newblock Efficient discovery of overlapping communities in massive networks.
\newblock {\em Proceedings of the National Academy of Sciences\/}~{\em 110},
  14534--14539.

\bibitem[\protect\citeauthoryear{Hampton}{Hampton}{2006}]{hampton2006cancer}
Hampton, T. 2006.
\newblock Cancer genome atlas.
\newblock {\em JAMA: The Journal of the American Medical Association\/}~{\em
  296\/}(16), 1958--1958.

\bibitem[\protect\citeauthoryear{Holland, Laskey, and Leinhardt}{Holland
  et~al.}{1983}]{holland1983stochastic}
Holland, P.~W., K.~B. Laskey, and S.~Leinhardt 1983.
\newblock Stochastic blockmodels: First steps.
\newblock {\em Social Networks\/}~{\em 5\/}(2), 109--137.

\bibitem[\protect\citeauthoryear{Hom and Johnson}{Hom and
  Johnson}{1991}]{hom1991topics}
Hom, R.~A. and C.~R. Johnson 1991.
\newblock Topics in matrix analysis.
\newblock {\em Cambridge UP, New York\/}.

\bibitem[\protect\citeauthoryear{Jones}{Jones}{2012}]{jones2012functions}
Jones, P. 2012.
\newblock Functions of {DNA} methylation: islands, start sites, gene bodies and
  beyond.
\newblock {\em Nature Reviews Genetics\/}~{\em 13\/}(7), 484--492.

\bibitem[\protect\citeauthoryear{Kallenberg}{Kallenberg}{2005}]{kallenberg2005probabilistic}
Kallenberg, O. 2005.
\newblock {\em Probabilistic symmetries and invariance principles}, Volume~9.
\newblock Springer.

\bibitem[\protect\citeauthoryear{Larremore, Clauset, and Jacobs}{Larremore
  et~al.}{2014}]{larremore2014efficiently}
Larremore, D.~B., A.~Clauset, and A.~Z. Jacobs 2014.
\newblock Efficiently inferring community structure in bipartite networks.
\newblock {\em Physical Review E\/}~{\em 90\/}(1), 012805.

\bibitem[\protect\citeauthoryear{Latouche, Birmel{\'e}, Ambroise,
  et~al.}{Latouche et~al.}{2011}]{latouche2011overlapping}
Latouche, P., E.~Birmel{\'e}, C.~Ambroise, et~al. 2011.
\newblock Overlapping stochastic block models with application to the french
  political blogosphere.
\newblock {\em The Annals of Applied Statistics\/}~{\em 5\/}(1), 309--336.

\bibitem[\protect\citeauthoryear{Lehoucq and S{\o}rensen}{Lehoucq and
  S{\o}rensen}{1996}]{lehoucq1996deflation}
Lehoucq, R.~B. and D.~C. S{\o}rensen 1996.
\newblock Deflation techniques for an implicitly restarted arnoldi iteration.
\newblock {\em SIAM Journal on Matrix Analysis and Applications\/}~{\em
  17\/}(4), 789--821.

\bibitem[\protect\citeauthoryear{Leskovec, Huttenlocher, and
  Kleinberg}{Leskovec et~al.}{2010}]{leskovec2010signed}
Leskovec, J., D.~Huttenlocher, and J.~Kleinberg 2010.
\newblock Signed networks in social media.
\newblock In {\em Proceedings of the SIGCHI conference on human factors in
  computing systems}, pp.\  1361--1370. ACM.

\bibitem[\protect\citeauthoryear{Madeira and Oliveira}{Madeira and
  Oliveira}{2004}]{madeira2004biclustering}
Madeira, S.~C. and A.~L. Oliveira 2004.
\newblock Biclustering algorithms for biological data analysis: a survey.
\newblock {\em Computational Biology and Bioinformatics, IEEE/ACM Transactions
  on\/}~{\em 1\/}(1), 24--45.

\bibitem[\protect\citeauthoryear{Newman}{Newman}{2013}]{newman2013spectral}
Newman, M.~E. 2013.
\newblock Spectral methods for community detection and graph partitioning.
\newblock {\em Physical Review E\/}~{\em 88\/}(4), 042822.

\bibitem[\protect\citeauthoryear{Newman}{Newman}{2016}]{newman2016equivalence}
Newman, M.~E. 2016.
\newblock Equivalence between modularity optimization and maximum likelihood
  methods for community detection.
\newblock {\em Physical Review E\/}~{\em 94\/}(5), 052315.

\bibitem[\protect\citeauthoryear{Newman and Girvan}{Newman and
  Girvan}{2004}]{newman2004finding}
Newman, M.~E. and M.~Girvan 2004.
\newblock Finding and evaluating community structure in networks.
\newblock {\em Physical Review E\/}~{\em 69\/}(2), 026113.

\bibitem[\protect\citeauthoryear{Olhede and Wolfe}{Olhede and
  Wolfe}{2012}]{olhede2012degree}
Olhede, S.~C. and P.~J. Wolfe 2012.
\newblock Degree-based network models.
\newblock {\em arXiv preprint arXiv:1211.6537\/}.

\bibitem[\protect\citeauthoryear{Olhede and Wolfe}{Olhede and
  Wolfe}{2014}]{olhede2014net}
Olhede, S.~C. and P.~J. Wolfe 2014.
\newblock Network histograms and universality of blockmodel approximation.
\newblock {\em Proceedings of the National Academy of Sciences\/}~{\em
  111\/}(41), 14722--14727.

\bibitem[\protect\citeauthoryear{Page, Brin, Motwani, and Winograd}{Page
  et~al.}{1999}]{page1999pagerank}
Page, L., S.~Brin, R.~Motwani, and T.~Winograd 1999.
\newblock The pagerank citation ranking: Bringing order to the web.

\bibitem[\protect\citeauthoryear{Peixoto}{Peixoto}{2019}]{peixoto2019bayesian}
Peixoto, T.~P. 2019.
\newblock Bayesian stochastic blockmodeling.
\newblock {\em Advances in network clustering and blockmodeling\/}, 289--332.

\bibitem[\protect\citeauthoryear{Perry and Wolfe}{Perry and
  Wolfe}{2012}]{perry2012null}
Perry, P.~O. and P.~J. Wolfe 2012.
\newblock Null models for network data.
\newblock {\em arXiv preprint arXiv:1201.5871\/}.

\bibitem[\protect\citeauthoryear{Qin and Rohe}{Qin and
  Rohe}{2013}]{qin2013regularized}
Qin, T. and K.~Rohe 2013.
\newblock Regularized spectral clustering under the degree-corrected stochastic
  blockmodel.
\newblock In {\em Advances in Neural Information Processing Systems}, pp.\
  3120--3128.

\bibitem[\protect\citeauthoryear{Riolo, Cantwell, Reinert, and Newman}{Riolo
  et~al.}{2017}]{riolo2017efficient}
Riolo, M.~A., G.~T. Cantwell, G.~Reinert, and M.~E. Newman 2017.
\newblock Efficient method for estimating the number of communities in a
  network.
\newblock {\em Physical review e\/}~{\em 96\/}(3), 032310.

\bibitem[\protect\citeauthoryear{Riolo and Newman}{Riolo and
  Newman}{2014}]{riolo2014first}
Riolo, M.~A. and M.~Newman 2014.
\newblock First-principles multiway spectral partitioning of graphs.
\newblock {\em Journal of Complex Networks\/}~{\em 2\/}(2), 121--140.

\bibitem[\protect\citeauthoryear{Rohe, Chatterjee, Yu, et~al.}{Rohe
  et~al.}{2011}]{rohe2011spectral}
Rohe, K., S.~Chatterjee, B.~Yu, et~al. 2011.
\newblock Spectral clustering and the high-dimensional stochastic blockmodel.
\newblock {\em The Annals of Statistics\/}~{\em 39\/}(4), 1878--1915.

\bibitem[\protect\citeauthoryear{Rohe and Yu}{Rohe and Yu}{2012}]{rohe2012co}
Rohe, K. and B.~Yu 2012.
\newblock Co-clustering for directed graphs; the stochastic co-blockmodel and a
  spectral algorithm.
\newblock {\em arXiv preprint arXiv:1204.2296\/}.

\bibitem[\protect\citeauthoryear{Rugeles, Zhao, Gao, Dash, and
  Krishnaswamy}{Rugeles et~al.}{2017}]{rugeles2017biclustering}
Rugeles, D., K.~Zhao, C.~Gao, M.~Dash, and S.~Krishnaswamy 2017.
\newblock Biclustering: An application of dual topic models.
\newblock In {\em Proceedings of the 2017 SIAM International Conference on Data
  Mining}, pp.\  453--461. SIAM.

\bibitem[\protect\citeauthoryear{S{\"o}derberg}{S{\"o}derberg}{2002}]{soderberg2002general}
S{\"o}derberg, B. 2002.
\newblock General formalism for inhomogeneous random graphs.
\newblock {\em Physical review E\/}~{\em 66\/}(6), 066121.

\bibitem[\protect\citeauthoryear{S{\o}rensen}{S{\o}rensen}{1992}]{sorensen1992implicit}
S{\o}rensen, D.~C. 1992.
\newblock Implicit application of polynomial filters in ak-step arnoldi method.
\newblock {\em SIAM Journal on Matrix Analysis and Applications\/}~{\em
  13\/}(1), 357--385.

\bibitem[\protect\citeauthoryear{Subramanian, Tamayo, Mootha, Mukherjee, Ebert,
  Gillette, Paulovich, Pomeroy, Golub, Lander, et~al.}{Subramanian
  et~al.}{2005}]{subramanian2005gene}
Subramanian, A., P.~Tamayo, V.~Mootha, S.~Mukherjee, B.~Ebert, M.~Gillette,
  A.~Paulovich, S.~Pomeroy, T.~Golub, E.~Lander, et~al. 2005.
\newblock Gene set enrichment analysis: a knowledge-based approach for
  interpreting genome-wide expression profiles.
\newblock {\em Proceedings of the National Academy of Sciences of the United
  States of America\/}~{\em 102\/}(43), 15545.

\bibitem[\protect\citeauthoryear{Tibshirani, Walther, and Hastie}{Tibshirani
  et~al.}{2001}]{tibshirani2001estimating}
Tibshirani, R., G.~Walther, and T.~Hastie 2001.
\newblock Estimating the number of clusters in a data set via the gap
  statistic.
\newblock {\em Journal of the Royal Statistical Society: Series B (Statistical
  Methodology)\/}~{\em 63\/}(2), 411--423.

\bibitem[\protect\citeauthoryear{Wagner}{Wagner}{2002}]{wagner2002estimating}
Wagner, A. 2002.
\newblock Estimating coarse gene network structure from large-scale gene
  perturbation data.
\newblock {\em Genome Research\/}~{\em 12\/}(2), 309--315.

\bibitem[\protect\citeauthoryear{Wilson, Wang, Mucha, Bhamidi, Nobel,
  et~al.}{Wilson et~al.}{2014}]{wilson2013testing}
Wilson, J.~D., S.~Wang, P.~J. Mucha, S.~Bhamidi, A.~B. Nobel, et~al. 2014.
\newblock A testing based extraction algorithm for identifying significant
  communities in networks.
\newblock {\em Annals of Applied Statistics\/}~{\em 8\/}(3), 1853--1891.

\bibitem[\protect\citeauthoryear{Wolfe and Olhede}{Wolfe and
  Olhede}{2013}]{wolfe2013nonparametric}
Wolfe, P.~J. and S.~C. Olhede 2013.
\newblock Nonparametric graphon estimation.
\newblock {\em arXiv preprint arXiv:1309.5936\/}.

\bibitem[\protect\citeauthoryear{Yen and Larremore}{Yen and
  Larremore}{2020}]{yen2020community}
Yen, T.-C. and D.~B. Larremore 2020.
\newblock Community detection in bipartite networks with stochastic block
  models.
\newblock {\em Physical Review E\/}~{\em 102\/}(3), 032309.

\bibitem[\protect\citeauthoryear{Zanghi, Picard, Miele, Ambroise,
  et~al.}{Zanghi et~al.}{2010}]{zanghi2010strategies}
Zanghi, H., F.~Picard, V.~Miele, C.~Ambroise, et~al. 2010.
\newblock Strategies for online inference of model-based clustering in large
  and growing networks.
\newblock {\em The Annals of Applied Statistics\/}~{\em 4\/}(2), 687--714.

\bibitem[\protect\citeauthoryear{Zhang and Peixoto}{Zhang and
  Peixoto}{2020}]{zhang2020statistical}
Zhang, L. and T.~P. Peixoto 2020.
\newblock Statistical inference of assortative community structures.
\newblock {\em Physical Review Research\/}~{\em 2\/}(4), 043271.

\bibitem[\protect\citeauthoryear{Zhang, Levina, and Zhu}{Zhang
  et~al.}{2015}]{zhang2015estimating}
Zhang, Y., E.~Levina, and J.~Zhu 2015.
\newblock Estimating network edge probabilities by neighborhood smoothing.
\newblock {\em arXiv preprint arXiv:1509.08588\/}.

\bibitem[\protect\citeauthoryear{Zhao, Levina, Zhu, et~al.}{Zhao
  et~al.}{2012}]{zhao2012consistency}
Zhao, Y., E.~Levina, J.~Zhu, et~al. 2012.
\newblock Consistency of community detection in networks under degree-corrected
  stochastic block models.
\newblock {\em The Annals of Statistics\/}~{\em 40\/}(4), 2266--2292.

\end{thebibliography}

\clearpage

\section*{Appendix}

\subsection*{Appendix A: Derivation relating to Algorithm \ref{specClustAlg}, for the case of two co-communities}\label{specClustAlgDerivSupp}
Define $m$, $l$, $\mathbf{A}$, $\mathbf{B}$, $\mathbf{d}^{(X)}$, $\mathbf{d}^{(Y)}$, $d^{++}$, $g^{(X)}$, $g^{(Y)}$, $k^{(X)}$, $k^{(Y)}$, $\Psi$ and $Q_{XY}$ according to Definitions \ref{mainModelDef} - \ref{localCoModDef}. Specify that $k^{(X)}=k^{(Y)}=2$, that $T=2$, that $c_1=\left\{1,1\right\}$, and that $c_2=\left\{2,2\right\}$; i.e., that there are two co-communities, the first of which consists of $g^{(X)}_1$ paired with $g^{(Y)}_1$, and the second of which consists of $g^{(X)}_2$ paired with $g^{(Y)}_2$. Define co-community label vectors $\mathbf{s}$ and $\mathbf{r}$ for the $X$ and $Y$-nodes respectively, such that:
\begin{align}
    s_i= 
\begin{cases}\label{sVecDef}
    1, & \text{if } X\text{-node }i\text{ is in co-community 1},\\
    -1, & \text{if } X\text{-node }i\text{ is in co-community 2},
\end{cases}
\end{align}
and
\begin{align}
    r_j= 
\begin{cases}\label{rVecDef}
    1, & \text{if } Y\text{-node }j\text{ is in co-community 1},\\
    -1, & \text{if } Y\text{-node }j\text{ is in co-community 2}.
\end{cases}
\end{align}
Hence (referring to definition \ref{coModDef}):
\begin{equation*}
\Psi\left(C;G^{(X)},G^{(Y)};i,j\right)=\frac{1}{2}\left(s_ir_j+1\right),
\end{equation*}
and
\begin{equation*}
Q_{XY}=\frac{1}{2d^{++}}\sum_{i=1}^m\sum_{j=1}^lB_{ij}\left(s_ir_j+1\right).
\end{equation*}
Note that the rows of $\mathbf{B}$ sum to zero:
\begin{equation*}
\sum_{j=1}^lB_{ij}=\sum_{j=1}^lA_{ij}-\frac{d^{(X)}_i}{d^{++}}\sum_{j=1}^ld^{(Y)}_j=d^{(X)}_i-\frac{d^{(X)}_i}{d^{++}}\cdot d^{++}=0.
\end{equation*}
Also, the columns of $\mathbf{B}$ also sum to zero, by a similar argument. Hence:
\begin{equation}
Q_{XY}=\frac{1}{2d^{++}}\sum_{i=1}^m\sum_{j=1}^lB_{ij}s_ir_j.\label{coModDefNewm}
\end{equation}
When \citep{newman2013spectral} derives the properties of unipartite network community detection he relaxes the constraint that the co-community labels take the values of $\pm1$, to be able to arrive at an algebraic solution. Nodes are then assigned to one community or the other, according to their sign (in the two-community scenario). A similar relaxation is made here, allowing $s_i\in\mathbb{R}$ and $r_j\in\mathbb{R}$, subject also to the following elliptical constraints, which allow for degree heterogeneity as in the degree corrected stochastic blockmodel:
\begin{align}
\sum_{i=1}^md^{(X)}_is_i^2&=d^{++}\label{constraintX},\\
\sum_{j=1}^ld^{(Y)}_jr_j^2&=d^{++}\label{constraintY}.
\end{align}
In the extreme scenario, in which $s_i\in\left\{-1,1\right\}$ and $r_j\in\left\{-1,1\right\}$, these constraints are equivalent to $d^{++}=\sum_{i=1}^md^{(X)}_i=\sum_{j=1}^ld^{(Y)}_j$ (i.e., as per definition \ref{coModDef}). This relaxation is equivalent to saying that nodes may be partly in one group, and partly in another group (which also relates to the mixed-membership blockmodel \citep{airoldi2009mixed}). N.B., ultimately each node will be assigned entirely to only the group it is most strongly associated with (according to $s_i$ or $r_j$), and hence mixed membership does not occur in the final assignment of nodes to groups. For homogenous degree distributions, the constraints of equation \ref{constraintX} and \ref{constraintY} prevent the co-modularity from becoming arbitrarily large, as nodes are assigned many times over to many groups. For heterogenous degree distributions, the effect of the constraint is equivalent, except that the constraint is weighted to give importance to high-degree nodes. This is achieved by the constraints  of equation \ref{constraintX} and \ref{constraintY} restricting the weighted sum of the degrees (weighted by the assignment of nodes to groups) to be the equal to the total number of edges. 

We wish to find the community assignment vectors $\textbf{r}$ and $\textbf{s}$ which maximise the co-modularity, i.e., we want to maximise $Q_{XY}$ with respect to both $\textbf{r}$ and $\textbf{s}$. To do this, we employ the Lagrange multipliers $\lambda$ and $\mu$, and equate the derivatives to zero, N.B., the partial derivatives with respect to $s_{i'}$ and $r_{j'}$ are used as the derivatives are taken with respect to these individual $i'\in\left\{1,...,l\right\}$, and  $j'\in\left\{1,...,m\right\}$.
\begin{align*}
\frac{\partial}{\partial s_{i'}}\left[\sum_{i=1}^m\sum_{j=1}^lB_{ij}s_ir_j-\lambda\sum_{i=1}^md^{(X)}_is_i^2-\mu\sum_{j=1}^ld^{(Y)}_jr_j^2\right]&=0,\\
\text{and}\quad\frac{\partial}{\partial r_{j'}}\left[\sum_{i=1}^m\sum_{j=1}^lB_{ij}s_ir_j-\lambda\sum_{i=1}^md^{(X)}_is_i^2-\mu\sum_{j=1}^ld^{(Y)}_jr_j^2\right]&=0,
\end{align*}
\begin{equation}
\implies\quad\sum_{j=1}^lB_{ij}r_j-2\lambda d^{(X)}_is_i=0,\label{maxEQlineX}
\end{equation}
\begin{equation}
\text{and}\quad\sum_{i=1}^mB_{ij}s_i-2\mu d^{(Y)}_jr_j=0.\label{maxEQlineY}
\end{equation}
Hence, taking $\mathbf{D}^{(X)}$ and $\mathbf{D}^{(Y)}$ as the diagonal matrices with the degree vectors $\mathbf{d}^{(X)}$ and $\mathbf{d}^{(Y)}$ respectively on their leading diagonals,
\begin{equation}
\quad\mathbf{B}\mathbf{r}=2\lambda\mathbf{D}^{(x)}\mathbf{s}\label{SmatrixMinimised}
\end{equation}
\begin{equation}
\text{and}\quad\mathbf{B}^{\top}\mathbf{s}=2\mu\mathbf{D}^{(y)}\mathbf{r}.\label{RmatrixMinimised}
\end{equation}
Substituting for $\mathbf{s}$, equation \ref{RmatrixMinimised} in \ref{SmatrixMinimised}, gives:
\begin{equation}
\left(\mathbf{D}^{(y)}\right)^{-1}\mathbf{B}^{\top}\left(\mathbf{D}^{(x)}\right)^{-1}\mathbf{B}\mathbf{r}=4\lambda\mu\mathbf{r},\label{simpMatrixMin}
\end{equation}
\begin{equation*}
\implies\left(\mathbf{D}^{(y)}\right)^{-1/2}\mathbf{B}^{\top}\left(\mathbf{D}^{(x)}\right)^{-1/2}\left(\mathbf{D}^{(x)}\right)^{-1/2}\mathbf{B}\left(\mathbf{D}^{(y)}\right)^{-1/2}\mathbf{r}=4\lambda\mu\mathbf{r},
\end{equation*}
\begin{equation}
\implies\left(\left(\mathbf{D}^{(x)}\right)^{-1/2}\mathbf{B}\left(\mathbf{D}^{(y)}\right)^{-1/2}\right)^{\top}\left(\left(\mathbf{D}^{(x)}\right)^{-1/2}\mathbf{B}\left(\mathbf{D}^{(y)}\right)^{-1/2}\right)\mathbf{r}=4\lambda\mu\mathbf{r},\label{eigenEQ1}
\end{equation}
\begin{equation}
\implies\mathbf{M}^{\top}\mathbf{M}\mathbf{r}=4\lambda\mu\mathbf{r},\label{lambdaEigenEq}
\end{equation}
where
\begin{equation*}
\mathbf{M}=\left(\mathbf{D}^{(x)}\right)^{-1/2}\mathbf{B}\left(\mathbf{D}^{(y)}\right)^{-1/2}.
\end{equation*}
By an identical argument, substituting \ref{SmatrixMinimised} in \ref{RmatrixMinimised} and re-arranging equivalently,
\begin{equation}
\mathbf{M}\mathbf{M}^{\top}\mathbf{s}=4\lambda\mu\mathbf{s}.\label{muEigenEq}
\end{equation}
Hence, $\mathbf{s}$ and $\mathbf{r}$ are eigenvectors of $\mathbf{M}\mathbf{M}^{\top}$ and $\mathbf{M}^{\top}\mathbf{M}$ respectively, with $4\lambda\mu$ the corresponding eigenvalue in both cases. Therefore, $\mathbf{s}$ and $\mathbf{r}$ are left and right singular vectors respectively of:
\begin{equation*}
\mathbf{M}=\left(\mathbf{D}^{(x)}\right)^{-1/2}\mathbf{B}\left(\mathbf{D}^{(y)}\right)^{-1/2},
\end{equation*}
with corresponding singular value $2\sqrt{\lambda\mu}$.

Multiplying equation \ref{maxEQlineX} by $s_i/2d^{++}$, summing over $i$ and referring to equation \ref{constraintX} gives:
\begin{align*}
\frac{1}{2d^{++}}\sum_{i=1}^m\sum_{j=1}^lB_{ij}s_ir_j=\frac{2\lambda}{2d^{++}}\sum_{i=1}^md^{(X)}_is_i^2=\frac{2\lambda\cdot d^{++}}{2d^{++}}=\lambda,
\end{align*}
hence referring to equation \ref{coModDefNewm}, we get:
\begin{equation}
Q_{XY}=\lambda.\label{optCoModLambda}
\end{equation}
Then equivalently multiplying equation \ref{maxEQlineY} by $r_j/2d^{++}$, summing over $j$ and referring to equation \ref{constraintY}, and then referring to equation \ref{coModDefNewm} gives: 
\begin{equation}
Q_{XY}=\mu.\label{optCoModMu}
\end{equation}
Therefore, referring again to equations \ref{lambdaEigenEq} and \ref{muEigenEq}, the maximum modularity solution is for the left and right singular vectors of $M$ which correspond to the greatest singular value $2\lambda$.

Now substituting equation \ref{Bdef} in equation \ref{RmatrixMinimised}, we get:
\begin{equation*}
\mathbf{s}^{\top}\left(\mathbf{A}-\frac{1}{d^{++}}\mathbf{d}^{(X)}\left(\mathbf{d}^{(Y)}\right)^{\top}\right)=2\mu\mathbf{r}^{\top}\mathbf{D}^{(y)},
\end{equation*}
\begin{equation}
\implies\mathbf{s}^{\top}\mathbf{A}=\frac{1}{d^{++}}\mathbf{s}^{\top}\mathbf{d}^{(X)}\left(\mathbf{d}^{(Y)}\right)^{\top}+2\mu\mathbf{r}^{\top}\mathbf{D}^{(y)}.\label{optCoModIntEq}
\end{equation}
Post-multiplying equation \ref{optCoModIntEq} by $\mathbf{1}=(1,1,1...)$ leads to:
\begin{equation*}
\mathbf{s}^{\top}\mathbf{d}^{(X)}=\frac{1}{d^{++}}\mathbf{s}^{\top}\mathbf{d}^{(X)}\cdot d^{++}+2\mu\mathbf{r}^{\top}\mathbf{d}^{(Y)}
\end{equation*}
\begin{equation*}
\therefore\quad\mu\mathbf{r}^{\top}\mathbf{d}^{(Y)}=0.
\end{equation*}
Assuming that there is co-community structure present in $\mathbf{A}$, there must be positive co-modularity, i.e., $Q_{XY}>0\enskip\implies\enskip\mu>0$ (referring back to equation \ref{optCoModMu}), and therefore $\mathbf{r}^{\top}\mathbf{d}^{(Y)}=0$. By an identical argument, also $\mathbf{s}^{\top}\mathbf{d}^{(X)}=0$. Therefore, for eigenvectors $\mathbf{r}$ corresponding to $Q_{XY}>0$,
\begin{equation*}
\mathbf{B}\mathbf{r}=\left(\mathbf{A}-\frac{1}{d^{++}}\mathbf{d}^{(X)}\left(\mathbf{d}^{(Y)}\right)^{\top}\right)\mathbf{r}=\mathbf{A}\mathbf{r}
\end{equation*}
and so to find these eigenvectors with $Q_{XY}$ maximised, instead of equation \ref{eigenEQ1} we can consider
\begin{equation}
\left(\left(\mathbf{D}^{(x)}\right)^{-1/2}\mathbf{A}\left(\mathbf{D}^{(y)}\right)^{-1/2}\right)^{\top}\left(\left(\mathbf{D}^{(x)}\right)^{-1/2}\mathbf{A}\left(\mathbf{D}^{(y)}\right)^{-1/2}\right)\mathbf{r}=\left(2\lambda\right)^2\mathbf{r}\label{AeigenEQ}
\end{equation}
which, referring back to equation \ref{coLaplacian}, can be written in terms of the co-Laplacian $\mathbf{L}_{XY}$ as:
\begin{align*}
\mathbf{L}_{XY}^{\top}\mathbf{L}_{XY}\mathbf{r}&=\left(2\lambda\right)^2\mathbf{r}.
\end{align*}
By identical argument, we can also write:
\begin{equation}
\left(\left(\mathbf{D}^{(x)}\right)^{-1/2}\mathbf{A}\left(\mathbf{D}^{(y)}\right)^{-1/2}\right)\left(\left(\mathbf{D}^{(x)}\right)^{-1/2}\mathbf{A}\left(\mathbf{D}^{(y)}\right)^{-1/2}\right)^{\top}\mathbf{s}=\left(2\lambda\right)^2\mathbf{s}\label{AeigenEQ2}
\end{equation}
and
\begin{align*}
\mathbf{L}_{XY}\mathbf{L}_{XY}^{\top}\mathbf{s}&=\left(2\lambda\right)^2\mathbf{s}.
\end{align*}
Hence, the co-Laplacian $\mathbf{L}_{XY}$ has left and right singular vectors $\mathbf{s}$ and $\mathbf{r}$ respectively, with corresponding singular values $2\lambda$. It can be seen that equation \ref{AeigenEQ} has the eigenvector $\mathbf{1}=(1,1,1,...)$, as follows:
\begin{align*}
\left(\left(\mathbf{D}^{(x)}\right)^{-1/2}\mathbf{A}\left(\mathbf{D}^{(y)}\right)^{-1/2}\right)^{\top}\left(\left(\mathbf{D}^{(x)}\right)^{-1/2}\mathbf{A}\left(\mathbf{D}^{(y)}\right)^{-1/2}\right)\mathbf{1}&=\left(2\lambda\right)^2\mathbf{1}\\
\implies\left(\mathbf{D}^{(y)}\right)^{-1}\mathbf{A}^{\top}\left(\mathbf{D}^{(x)}\right)^{-1}\mathbf{A}\mathbf{1}&=\left(2\lambda\right)^2\mathbf{1}\\
\implies\left(\mathbf{D}^{(y)}\right)^{-1}\mathbf{A}^{\top}\left(\mathbf{D}^{(x)}\right)^{-1}\mathbf{d}^{(X)}&=\left(2\lambda\right)^2\mathbf{1}\\
\implies\left(\mathbf{D}^{(y)}\right)^{-1}\mathbf{A}^{\top}\mathbf{1}&=\left(2\lambda\right)^2\mathbf{1}\\
\implies\left(\mathbf{D}^{(y)}\right)^{-1}\mathbf{d}^{(Y)}&=\left(2\lambda\right)^2\mathbf{1}\\
\mathbf{1}&=\left(2\lambda\right)^2\mathbf{1}
\end{align*}
and hence the corresponding eigenvalue is $\left(2\lambda\right)^2=1$, which by the Perron-Frobenius theorem, must be the greatest eigenvalue \citep{hom1991topics,newman2013spectral}. An identical argument can also be applied to $\textbf{s}$ in equation \ref{AeigenEQ2}. This means that the greatest singular value $2\lambda=1$ corresponds to these left and right singular vectors which are both $\mathbf{1}$ (of lengths $m$ and $l$ respectively), however such singular vectors do not satisfy $\mathbf{r}^{\top}\mathbf{d}^{(Y)}=0$ and $\mathbf{s}^{\top}\mathbf{d}^{(X)}=0$. Therefore, to maximise the co-modularity in the case of two co-communities, we should divide the $X$ and $Y$-nodes according to the left and right singular vectors respectively which correspond to the second greatest singular value.

The above explains how Algorithm \ref{specClustAlg} works for the case of two co-communities. An equivalent extension to $k$ communities has been made in the unipartite community detection setting \citep{riolo2014first}. To do so, the community labels are identified with the vertices of $k-1$ simplicies, i.e., for detection of 3 communities, the co-community labels would be the vertices of a triangle. Relaxing constraints equivalent to equations \ref{constraintX} and \ref{constraintY} means allowing the nodes to move away from the vertices of the simplex. This amounts to clustering the nodes in the space of the eigenvectors corresponding to the 2\textsuperscript{nd} to $k$\textsuperscript{th} greatest eigenvalues of the Laplacian $\mathbf{L}$. This clustering is conventionally done using $k$-means. The reader is referred to \citep{riolo2014first} for the detailed technical derivations relating to this. A similar extension can naturally be made in this co-community detection setting. To detect $k^{(X)}$ $X$-node groupings, and $k^{(Y)}$ $Y$-node groupings, the $X$ and $Y$-nodes can be separately clustered (using $k$-means independently for the $X$ and $Y$-nodes) in the spaces of the left and right singular vectors (respectively) corresponding to the 2\textsuperscript{nd} to $k^{(X)}$\textsuperscript{th} and 2\textsuperscript{nd} to $k^{(Y)}$\textsuperscript{th} greatest singular values, respectively, of the singular value decomposition of the co-Laplacian $\mathbf{L}_{XY}$.

\subsection*{Appendix B: Proof of Proposition \ref{poisLprop}, for the case of two co-communities}
For the case of two co-communities, with $\theta_{\mathrm{in}}$ and $\theta_{\mathrm{out}}$ defined according to equation \ref{poissonCons}, with the co-community labels $r_i$ and $s_j$ defined as in Appendix A / section \ref{specClustAlgDerivSupp} (equations \ref{sVecDef} and \ref{rVecDef}), and with $G^{(X)}$ and $G^{(Y)}$ defined according to Definition \ref{mainModelDef}, we note (equivalently to \citep{newman2013spectral}) that:
\begin{align}
\theta_{z^{(X)}(i),z^{(Y)}(j)}&=\frac{1}{2}\left(\theta_{\mathrm{in}}+\theta_{\mathrm{out}}+r_is_j\left(\theta_{\mathrm{in}}-\theta_{\mathrm{out}}\right)\right),\label{noteEQ} \\
\text{and\quad} \ln{\left(\theta_{z^{(X)}(i),z^{(Y)}(j)}\right)}&=\frac{1}{2}\left(\ln{\left(\theta_{\mathrm{in}}\theta_{\mathrm{out}}\right)}+r_is_j\ln{\left(\frac{\theta_{\mathrm{in}}}{\theta_{\mathrm{out}}}\right)}\right),\label{noteEQlog}
\end{align}
We note that equations \ref{noteEQ} and \ref{noteEQlog} only hold because $s_i\in\left\{-1,1\right\}$ and $r_j\in\left\{-1,1\right\}$. Substituting equations \ref{noteEQ} and \ref{noteEQlog} into equation \ref{modelEll}, and estimating the node-specific connectivity parameters $\boldsymbol{\pi}^{(X)}$ and $\boldsymbol{\pi}^{(Y)}$ by the degree distributions $\mathbf{d}^{(X)}$ and $\mathbf{d}^{(Y)}$ leads to the profile likelihood (using the Poisson approximation of \cite{perry2012null}):
\begin{multline*}
\ell\left(\boldsymbol{\theta};\mathbf{d}^{(X)},\mathbf{d}^{(Y)};G^{(X)},G^{(Y)}\right)=\sum_{i=1}^m\sum_{j=1}^l\left[\frac{A_{ij}}{2}\left(\ln{\left(\theta_{\mathrm{in}}\theta_{\mathrm{out}}\right)}+r_is_j\ln{\left(\frac{\theta_{\mathrm{in}}}{\theta_{\mathrm{out}}}\right)}\right)\right.\\
\left.-\frac{d^{(X)}_id^{(Y)}_j}{2}\left(\theta_{\mathrm{in}}+\theta_{\mathrm{out}}+r_is_j\left(\theta_{\mathrm{in}}-\theta_{\mathrm{out}}\right)\right)\right]
\end{multline*}
\begin{multline*}
\implies\quad\ell\left(\boldsymbol{\theta};\mathbf{d}^{(X)},\mathbf{d}^{(Y)};G^{(X)},G^{(Y)}\right)=\frac{1}{2}\sum_{i=1}^m\sum_{j=1}^l\left[A_{ij}\ln{\left(\theta_{\mathrm{in}}\theta_{\mathrm{out}}\right)}-d^{(X)}_id^{(Y)}_j\left(\theta_{\mathrm{in}}+\theta_{\mathrm{out}}\right)\right.\\ 
\left.+\ln{\left(\frac{\theta_{\mathrm{in}}}{\theta_{\mathrm{out}}}\right)}\left(A_{ij}-d^{(X)}_id^{(Y)}_j\cdot\frac{\theta_{\mathrm{in}}-\theta_{\mathrm{out}}}{\ln\theta_{\mathrm{in}}-\ln\theta_{\mathrm{out}}}\right)s_ir_j\right].
\end{multline*}
We seek to maximise $\ell\left(\boldsymbol{\theta};\mathbf{d}^{(X)},\mathbf{d}^{(Y)};G^{(X)},G^{(Y)}\right)$ with respect to $G^{(X)}$ and $G^{(Y)}$ by choosing the co-community labels $s_i$ and $r_j$. Therefore, we can drop the terms constant in $s_i$ and $r_j$ to give:
\begin{equation*}
\widetilde{\ell}\left(\boldsymbol{\theta};\mathbf{d}^{(X)},\mathbf{d}^{(Y)};G^{(X)},G^{(Y)}\right)=\sum_{i=1}^m\sum_{j=1}^l\left(A_{ij}-d^{(X)}_id^{(Y)}_j\cdot\frac{\theta_{\mathrm{in}}-\theta_{\mathrm{out}}}{\ln\theta_{\mathrm{in}}-\ln\theta_{\mathrm{out}}}\right)s_ir_j,\\
\end{equation*}
and defining:
\begin{equation*}
\eta=\frac{\theta_{\mathrm{in}}-\theta_{\mathrm{out}}}{\ln\theta_{\mathrm{in}}-\ln\theta_{\mathrm{out}}},
\end{equation*}
we therefore have:
\begin{equation}
\widetilde{\ell}\left(\boldsymbol{\theta};\mathbf{d}^{(X)},\mathbf{d}^{(Y)};G^{(X)},G^{(Y)}\right)=\sum_{i=1}^m\sum_{j=1}^l\left(A_{ij}-\eta d^{(X)}_id^{(Y)}_j\right)s_ir_j,\label{modelEllTildeIntermediate}
\end{equation}
which we note as equivalent to equation 22 in \citep{newman2013spectral}. Proceeding similarly to that work, by applying to equation \ref{modelEllTildeIntermediate} the constraints of equations \ref{constraintX} and \ref{constraintY} with Lagrange multipliers $\lambda$ and $\mu$ and differentiating and equating to zero, we get:
\begin{align*}
\frac{\partial}{\partial s_{i'}}\left[\sum_{i=1}^m\sum_{j=1}^l\left(A_{ij}-\eta d^{(X)}_id^{(Y)}_j\right)s_ir_j-\lambda\sum_{i=1}^md^{(X)}_is_i^2-\mu\sum_{j=1}^ld^{(Y)}_jr_j^2\right]&=0,\\
\frac{\partial}{\partial r_{j'}}\left[\sum_{i=1}^m\sum_{j=1}^l\left(A_{ij}-\eta d^{(X)}_id^{(Y)}_j\right)s_ir_j-\lambda\sum_{i=1}^md^{(X)}_is_i^2-\mu\sum_{j=1}^ld^{(Y)}_jr_j^2\right]&=0,
\end{align*}
\begin{align*}
\implies\quad\sum_{j=1}^l\left(A_{ij}-\eta d^{(X)}_id^{(Y)}_j\right)r_j-2\lambda d^{(X)}_is_i&=0,\\
\text{and}\quad\sum_{i=1}^m\left(A_{ij}-\eta d^{(X)}_id^{(Y)}_j\right)s_i-2\mu d^{(Y)}_jr_j&=0,
\end{align*}
and therefore also recalling the definitions of $\mathbf{D}^{(X)}$ and $\mathbf{D}^{(Y)}$ as the diagonal matrices with the degree vectors $\mathbf{d}^{(X)}$ and $\mathbf{d}^{(Y)}$ respectively on their leading diagonals,
\begin{align}
\quad\left(\mathbf{A}-\eta\boldsymbol{\mathbf{d}}^{(X)}\left(\boldsymbol{\mathbf{d}}^{(Y)}\right)^{\top}\right)\mathbf{r}&=2\lambda\mathbf{D}^{(X)}\mathbf{s},\label{maximSell}\\
\text{and}\quad\left(\mathbf{A}^{\top}-\eta\boldsymbol{\mathbf{d}}^{(Y)}\left(\boldsymbol{\mathbf{d}}^{(X)}\right)^{\top}\right)\mathbf{s}&=2\mu\mathbf{D}^{(Y)}\mathbf{r}.\label{maximRell}
\end{align}
Combining equations \ref{maximSell} and \ref{maximRell} by substituting for $\mathbf{s}$ and $\mathbf{r}$, and following simplification identical to equations \ref{simpMatrixMin} to \ref{eigenEQ1}, gives:
\begin{align*}
\mathbf{W}^{\top}\mathbf{W}\mathbf{r}&=4\lambda\mu\mathbf{r},\\
\text{and}\quad\mathbf{W}\mathbf{W}^{\top}\mathbf{s}&=4\lambda\mu\mathbf{s},
\end{align*}
where 
\begin{equation*}
\mathbf{W}=\left(\mathbf{D}^{(X)}\right)^{-1/2}\left(\mathbf{A}-\eta\boldsymbol{\mathbf{d}}^{(X)}\left(\boldsymbol{\mathbf{d}}^{(Y)}\right)^{\top}\right)\left(\mathbf{D}^{(Y)}\right)^{-1/2}.
\end{equation*}
Hence $\mathbf{s}$ and $\mathbf{r}$ are left and right singular vectors of the singular value decomposition of $\mathbf{W}$, again with corresponding singular values $4\lambda\mu$. Defining $\mathbf{1}=(1,1,1...)$, and noting that $\mathbf{1}\boldsymbol{\mathbf{d}}^{(X)}\left(\boldsymbol{\mathbf{d}}^{(Y)}\right)^{\top}=d^{++}\left(\boldsymbol{\mathbf{d}}^{(Y)}\right)^{\top}$ etc, we can see that pre-multiplying \ref{maximSell} and \ref{maximRell} by $\mathbf{1}$ leads to:
\begin{align}
\mathbf{r}^{\top}\boldsymbol{\mathbf{d}}^{(Y)}\left(1-d^{++}\eta\right)&=2\lambda\mathbf{s}^{\top}\boldsymbol{\mathbf{d}}^{(X)},\label{eigenEllS}\\
\text{and}\quad\mathbf{s}^{\top}\boldsymbol{\mathbf{d}}^{(X)}\left(1-d^{++}\eta\right)&=2\mu\mathbf{r}^{\top}\boldsymbol{\mathbf{d}}^{(Y)}.\label{eigenEllR}
\end{align}
Substituting for $\mathbf{s}^{\top}\boldsymbol{\mathbf{d}}^{(X)}$ and $\mathbf{r}^{\top}\boldsymbol{\mathbf{d}}^{(Y)}$, equation \ref{eigenEllR} in equation \ref{eigenEllS} and {\it vice-versa}, gives:
\begin{align*}
\mathbf{s}^{\top}\boldsymbol{\mathbf{d}}^{(X)}\left[\left(1-d^{++}\eta\right)^2-4\mu\lambda\right]&=0,\\
\text{and}\quad\mathbf{r}^{\top}\boldsymbol{\mathbf{d}}^{(Y)}\left[\left(1-d^{++}\eta\right)^2-4\mu\lambda\right]&=0,
\end{align*}
and therefore because $\left(1-d^{++}\eta\right)^2-4\mu\lambda$ is not guaranteed to be zero, 
\begin{align*}
\mathbf{s}^{\top}\boldsymbol{\mathbf{d}}^{(X)}&=0,\\
\text{and}\quad\mathbf{r}^{\top}\boldsymbol{\mathbf{d}}^{(Y)}&=0.
\end{align*}
Therefore, equations \ref{maximSell} and \ref{maximRell} reduce to:
\begin{align*}
\mathbf{A}\mathbf{r}&=2\lambda\mathbf{D}^{(x)}\mathbf{s}\\
\text{and}\quad\mathbf{A}^{\top}\mathbf{s}&=2\mu\mathbf{D}^{(y)}\mathbf{r},
\end{align*}
and again combining these equations by substituting for $\mathbf{s}$ and $\mathbf{r}$ and following equivalent simplification to equations \ref{simpMatrixMin} to \ref{eigenEQ1}, we hence find that $\mathbf{s}$ and $\mathbf{r}$ are left and right singular vectors of the co-Laplacian (equation \ref{coLaplacian}).
Therefore, the choice of the co-community labels $\mathbf{s}$ and $\mathbf{r}$ which maximises the model likelihood specified in equation \ref{modelEll}, subject also to the constraint of equation \ref{poissonCons}, is equivalent to the maximum co-modularity assignment obtained via Algorithm \ref{specClustAlg}.

\subsection*{Appendix C: Selecting the Number of Co-communities}\label{selNumCom}
In order to use Algorithm \ref{specClustAlg} to carry out co-community detection, we must specify the number of $X$-node groupings $k^{(X)}$, and the number of $Y$-node groupings $k^{(Y)}$. The following procedure for estimating optimal numbers of these groupings was developed as part of the author's PhD under the guidance of Sofia Olhede.

If we want to let the network grow, it would be impractical to fully specify more complicated versions of the parametric model of Definition \ref{mainModelDef}, which completely account for all effects. Instead, we can make a non-parametric generalisation of this model incorporating more smoothing, based on the notion of the graphon. The graphon is a latent, smooth function which sets the probability between each pair of nodes, of a connection forming between that pair of nodes \citep{wolfe2013nonparametric}. In this setting, the graphon is not symmetric, due to the two different types of nodes modelled.
\begin{Definition}\label{anisGraphModelDef}
For the Lipschitz-continuous graphon function $f\in L\left((0,1)^2\right)$, with $\mathbf{A}$ defined according to Definition \ref{mainModelDef}, define connectivity functions $\phi^{(X)}\in L\left(0,1\right)$ and $\phi^{(Y)}\in L\left(0,1\right)$, and define latent orderings $\xi^{(X)}_i\overset{i.i.d.}{\sim}\mathcal{U}\left(0,1\right)$ and (independently) $\xi^{(Y)}_j\overset{i.i.d.}{\sim}\mathcal{U}\left(0,1\right)$ on the graphon margins of $X$ and $Y$-nodes $i\in\{1,...,m\}$ and $j\in\{1,...,l\}$ respectively. Then,
\begin{equation}
\mathbb{E}\left(A_{ij}\right)=f\left(\xi^{(X)}_i,\xi^{(Y)}_j\right)\cdot\phi^{(X)}\left(\xi^{(X)}_i\right)\cdot\phi^{(Y)}\left(\xi^{(Y)}_j\right). \label{graphonExpAdj}
\end{equation}
\end{Definition}
\noindent
The graphon $f$ (Definition \ref{anisGraphModelDef}) can be considered an infinite-dimensional equivalent to $\theta_{p,q}$ (Definition \ref{mainModelDef}), up to a re-ordering of the nodes (after the Aldous-Hoover theorem \citep{aldous1985exchangeability}). The connectivity functions $\phi^{(X)}$ and $\phi^{(Y)}$ (Definition \ref{anisGraphModelDef}) are then similarly equivalent to the node-specific connectivity parameters $\boldsymbol{\pi}^{(X)}$ and $\boldsymbol{\pi}^{(Y)}$ (Definition \ref{mainModelDef} for the degree corrected stochastic co-blockmodel). These functions $\phi^{(X)}$ and $\phi^{(Y)}$ model the general variability of connectivity strength throughout the network, whereas the graphon $f$ models the tendency for regions of the network to aggregate into specific co-communities. The model of Definition \ref{anisGraphModelDef} is a more general model which is specified similarly for any network size. Thus, equation \ref{graphonExpAdj} contains redundancy, and hence as the networks we consider here are of fixed size, the degree corrected stochastic co-blockmodel (Definition \ref{mainModelDef}) may be a more parsimonious choice. To estimate the generating mechanism of a bipartite network stably, Definition \ref{anisGraphModelDef} must be replaced by a model with a limited number of parameters, i.e., Definition \ref{mainModelDef}.

The network histogram method of fitting the stochastic blockmodel \citep{olhede2014net} in the unipartite/symmetric community detection setting provides a rule-of-thumb method for selecting the optimal number of communities, or blocks, in the model. However we note that this rule-of-thumb method may not result in a perfect match between the number of communities and the size of blocks, as spectral clustering may not result in equal size communities. Fitted in this way, the blockmodel is a valid representation of a network, whatever the generating mechanism of that network, as long as this generating mechanism results in an exchangeable network. The network histogram approximates the graphon, which is a continuous function: the nodes correspond to discrete locations along the graphon margins, ordered in an optimal way to satisfy the smoothness requirement of the graphon. The graphon oracle \citep{wolfe2013nonparametric,olhede2014net} defines a good ordering of the nodes, according to graphon smoothness, and co-association patterns. This information is not available in practice, but it can be used to bound the mean integrated squared error of the network histogram approximation to the graphon. This ordering naturally corresponds to community assignments, and the number of communities, or blocks, is determined by the smoothness of the graphon. An intuition for this is by analogy with a wave: if there are many peaks over a fixed distance (i.e., short wavelength), the maximum gradient of the wave will be large, whereas if there are few peaks over the same fixed distance (i.e., long wavelength), the maximum gradient will be small. Similarly, the more communities, or peaks, that there are in the graphon, the greater the maximum gradient of the graphon will be, and, correspondingly, the less smooth it will be. 

\subsubsection*{Finding the optimal numbers of $X$ and $Y$-node groupings}
In this section we define the anisotropic graphon, which allows us to determine an optimal number of $X$ and $Y$-node groupings, $k^{(X)}$ and $k^{(Y)}$, from which co-communities can be identified. This relates closely to the network histogram method in the symmetric unipartite community detection setting \citep{olhede2014net}. In the unipartite community detection setting, the graphon is a symmetric limit object bounded on $(0,1)^2$. It is symmetric because the in that setting, the set of $X$-nodes is the same as the set of $Y$-nodes, and hence the smoothness is the same with respect to the corresponding orthogonal directions on the graphon. In contrast, in this co-community detection setting the graphon is asymmetric, having different smoothnesses with respect to the $X$ and $Y$-nodes. Hence, we refer to this as the `anisotropic graphon', which is similarly a limit object bounded on $(0,1)^2$. To aid our analyses, we can stretch the anisotropic graphon so that it has the same smoothness with respect to the $X$-nodes, and with respect to the $Y$-nodes. It is easy to see that such a transformation exists for all anisotropic graphons. We refer to the result of stretching the anisotropic graphon in this way, as the `equi-smooth graphon'. Without loss of generality, this transformation can be expressed as a stretch of scale-factor $\gamma$ with respect to the $X$-nodes, and a simultaneous stretch of scale-factor $1/\gamma$ with respect to the $Y$-nodes. We refer to $\gamma$ as the anisotropy factor. This is formalised as follows.
\begin{Definition}\label{anGraGammaDef}
For the Lipschitz-continuous anisotropic graphon $f\in L\left((0,1)^2\right)$ defined according to Definition \ref{anisGraphModelDef}, let the anisotropy factor $\gamma$ define the linear-stretch transformation which maps $f$ onto the Lipschitz-continuous equi-smooth graphon $\widetilde{f}\in L\left((0,\gamma)\times(0,1/\gamma)\right)$. Then,
\begin{equation}
f(x,y)=\widetilde{f}\left(\gamma x,y/\gamma\right).\label{graphonMainEq}
\end{equation}
\end{Definition}
\noindent
Lipschitz-continuity, in this context, means that the smoothness of the graphon (anisotropic or equi-smooth) is upper-bounded, and we use this bound to calculate the optimal number of $X$ and $Y$-node groupings. We note that as the graphon is asymmetric in the bipartite case, this assignation is still suitable.

To determine the optimal number of $X$ and $Y$-node groupings, $k^{(X)}$ and $k^{(Y)}$, assuming a fixed block-size to do the mapping, we set these $k^{(X)}$ and $k^{(Y)}$ so as to minimise the mean integrated squared error (MISE) of the blockmodel approximation of the graphon. Following a methodology which is closely related to the network histogram estimator in the symmetric (unipartite) community detection setting \citep{olhede2014net}, making use of the graphon oracle estimator, an upper bound can be calculated on this MISE, from a bias-variance decomposition, as follows:
\begin{Lemma}\label{MISElem}
With $\mathbf{A}$, $m$, $l$, $g^{(X)}\in G^{(X)}$, and $g^{(Y)}\in G^{(Y)}$ defined according to Definition \ref{mainModelDef}, let $\rho$ be a deterministic scaling constant which specifies the expected number of edges in the network, such that:
\begin{equation*}
\rho=\mathbb{E}\left(\frac{1}{ml}\sum_{j=1}^{l}\sum_{i=1}^{m}A_{ij}\right),
\end{equation*}
and define piecewise block-approximations to the adjacency matrix, for each pairing of a set of $X$-nodes $g^{(X)}$ with a set of $Y$-nodes $g^{(Y)}$, as:
\begin{equation*}
\bar{A}_{p,q}=\frac{\sum_{i\in g^{(X)}_p,j\in g^{(Y)}_q}A_{ij}}{\left|g^{(X)}_p\right|\left|g^{(Y)}_q\right|},
\end{equation*}
where $\left|\cdot\right|$ represents cardinality. With $z^{(X)}$ and $z^{(Y)}(j)$ defined according to Definition \ref{mainModelDef}, $\xi^{(X)}$ and $\xi^{(Y)}$ defined according to Definition \ref{anisGraphModelDef}, and $f$ defined according to Definition \ref{anGraGammaDef}, define alternative map functions $\widetilde{z}^{(X)}(i')$, $i'\in\left\{1,...,m\right\}$, and $\widetilde{z}^{(Y)}(j')$, $j'\in\left\{1,...,l\right\}$. These map functions take the ordered locations of the $X$ and $Y$-nodes respectively along the graphon margins, as specified by $\xi^{(X)}$ and $\xi^{(Y)}$, and return the corresponding $X$ and $Y$-node groupings, such that $\widetilde{z}^{(X)}\left(\ceil[\Big]{m\cdot\xi^{(X)}_i}\right)=z^{(X)}(i)$, and $\widetilde{z}^{(Y)}\left(\ceil[\Big]{l\cdot\xi^{(Y)}_j}\right)=z^{(Y)}(j)$. Define the graphon oracle estimator as:
\begin{equation}
\hat{f}(x,y)=\hat{\rho}^{-1}\bar{A}_{\widetilde{z}^{(X)}(\lceil lx\rceil),\widetilde{z}^{(Y)}(\lceil my\rceil)},
\end{equation}
and let:
\begin{equation}
\iint_{(0,1)^2}f(x,y)dx\ dy=1.\label{graphonIntConstr}
\end{equation}
With $\widetilde{f}$ and $\gamma$ defined as in Definition \ref{anGraGammaDef}, let $\widetilde{M}$ be the maximum gradient of $\widetilde{f}$, and let $h^{(X)}$ and $h^{(Y)}$ be `bandwidth' parameters with respect to the $X$ and $Y$ nodes respectively. Then, the graphon oracle upper bound on the MISE of the blockmodel estimate of the graphon function $\hat{f}$ is:
\begin{multline}
\mathrm{MISE\left(\hat{f}\right)}\leq \widetilde{M}^2\left\{\gamma^2\cdot\frac{\left(h^{(X)}\right)^2}{m^2}+\frac{1}{\gamma^2}\cdot\frac{\left(h^{(Y)}\right)^2}{l^2}\right\}\\+2\widetilde{M}^2\left\{\gamma^2\cdot\frac{1}{4m}+\frac{1}{\gamma^2}\cdot\frac{1}{4l}\right\}\left\{1+o\left(1\right)\right\}+\frac{1}{\rho\cdot h^{(X)}\cdot\ h^{(Y)}}\left\{1+o\left(1\right)\right\}.\label{MISEexpr}
\end{multline}
\end{Lemma}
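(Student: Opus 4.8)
The plan is to bound the MISE by a bias--variance decomposition carried out conditionally on the latent orderings $\xi^{(X)}$ and $\xi^{(Y)}$, and then to push every bias estimate through the linear stretch of Definition~\ref{anGraGammaDef} onto the equi-smooth graphon $\widetilde{f}$, where the single maximum-gradient constant $\widetilde{M}$ controls local oscillation in both coordinate directions. Writing
\[
\mathrm{MISE}\left(\hat f\right)=\mathbb{E}\iint_{(0,1)^2}\left(\hat f(x,y)-f(x,y)\right)^2\,dx\,dy
\]
and inserting $\mathbb{E}\!\left[\hat f(x,y)\mid\xi^{(X)},\xi^{(Y)}\right]$, the integrand splits into an integrated conditional-variance piece, an integrated squared-bias piece comparing the block average of $f$ against $f$ pointwise, and an error from replacing $\hat\rho$ by $\rho$, the Bernoulli law by the Poisson law of \eqref{modelEll}, and from the connectivity functions $\phi^{(X)},\phi^{(Y)}$ of Definition~\ref{anisGraphModelDef}; all three of the latter effects I would show to be of relative order $o(1)$ and absorb into the $\{1+o(1)\}$ factors of \eqref{MISEexpr}. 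The bias--variance cross term vanishes once one conditions on the orderings.

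For the bias, observe that each oracle block is mapped, under $(x,y)\mapsto(\gamma x,y/\gamma)$, to a rectangle in the $\widetilde f$-coordinates with side lengths $\gamma\,|g^{(X)}_p|/m$ and $|g^{(Y)}_q|/(\gamma l)$. On such a rectangle $|\widetilde f-\overline{\widetilde f}|$ is at most $\widetilde M$ times the distance to the block centre, so the integrated squared deviation over one block is bounded by a constant multiple of $\widetilde M^2$ times the sum of the squared side lengths times the block area; summing over the tiling of $(0,1)^2$ and using $|g^{(X)}_p|\le h^{(X)}(1+o(1))$ and $|g^{(Y)}_q|\le h^{(Y)}(1+o(1))$ yields the leading bias term $\widetilde M^2\{\gamma^2(h^{(X)})^2/m^2+(h^{(Y)})^2/(\gamma^2 l^2)\}$. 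The remaining bias contribution arises because the oracle groups a fixed number of consecutive order statistics of the $\xi$'s rather than an exact sub-interval of the margin: the block endpoints deviate from the nominal grid points $i/m$ and $j/l$ by fluctuations of uniform order statistics, whose second moments are of order $1/m$ and $1/l$ respectively. Propagating these fluctuations through the Lipschitz bound on $\widetilde f$, again with the $\gamma$, $1/\gamma$ rescaling of the two directions, and evaluating the expectation of the relevant spacing quantity to pin down the constant $1/4$, gives the term $2\widetilde M^2\{\gamma^2/(4m)+1/(4\gamma^2 l)\}\{1+o(1)\}$; the ceiling rounding in $\lceil m\,\xi^{(X)}_i\rceil$ and $\lceil l\,\xi^{(Y)}_j\rceil$ is of the same order and is absorbed here.

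For the variance, conditionally on the orderings the entries $A_{ij}$ are independent with $\operatorname{Var}(A_{ij}\mid\xi)=\mathbb{E}(A_{ij}\mid\xi)=\rho\,f(\xi^{(X)}_i,\xi^{(Y)}_j)(1+o(1))$ under the Poisson law, so $\operatorname{Var}(\bar A_{p,q}\mid\xi)$ equals $\rho$ times the block average of $f$ divided by $|g^{(X)}_p||g^{(Y)}_q|$. Multiplying by $\hat\rho^{-2}=\rho^{-2}(1+o(1))$, integrating over the corresponding block, which occupies area $|g^{(X)}_p||g^{(Y)}_q|/(ml)$ of the unit square, and summing, the cardinalities cancel and there remains $(ml\rho)^{-1}\sum_{p,q}(\text{block average of }f)$. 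Since the block sizes are $h^{(X)}h^{(Y)}(1+o(1))$ and $\iint f=1$ by \eqref{graphonIntConstr}, this sum equals $ml/(h^{(X)}h^{(Y)})\,(1+o(1))$, leaving the variance term $1/(\rho\,h^{(X)}h^{(Y)})\{1+o(1)\}$. Collecting the three pieces gives \eqref{MISEexpr}.

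I expect the main obstacle to be the bookkeeping for the second bias term: one must verify that the randomness of the latent uniform positions --- the gap between $\xi^{(X)}_{(i)}$ and $i/m$, together with the ceiling rounding --- contributes precisely at order $\gamma^2/m+1/(\gamma^2 l)$, with the stated constant and no worse, and that this is genuinely a separate term rather than being dominated by the bandwidth term; this is where the anisotropic rescaling interacts most delicately with the order-statistic estimates underlying the network-histogram analysis of \citep{olhede2014net}. A secondary technical point is to show that the substitution $\hat\rho\to\rho$ and the degree-correction functions $\phi^{(X)},\phi^{(Y)}$ perturb each of the three terms only at relative order $o(1)$, which I would obtain from a uniform-continuity and concentration argument on the empirical degree normalisations.
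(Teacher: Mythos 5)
Your proposal follows essentially the same route as the paper's proof in Supplement C: a blockwise bias--variance decomposition in which the bandwidth term comes from the Lipschitz bound on $\widetilde f$ pushed through the $(\gamma,1/\gamma)$ stretch, the $2\widetilde M^2\{\gamma^2/(4m)+1/(4\gamma^2 l)\}$ term comes from the variance $i_m(1-i_m)/(m+2)\le \tfrac{1}{4}/(m+2)$ of the uniform order statistics, and the $1/(\rho h^{(X)}h^{(Y)})$ term follows from summing the block-average variances and invoking $\iint f=1$. The only differences are cosmetic (you condition on the latent orderings rather than decomposing around $\mathbb{E}(\bar A_{p,q})$, and you are more explicit than the paper about absorbing the $\hat\rho\to\rho$ and $\phi^{(X)},\phi^{(Y)}$ corrections into the $\{1+o(1)\}$ factors), so the plan is sound.
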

\begin{proof}
See Appendix D.
\end{proof}
\noindent
As would be expected from the form of the anisotropic graphon, this expression directly captures the resolution obtained in each axis. We note that the ordering of the nodes along the adjacency matrix margins is not necessarily the same as in their mappings along the graphon margins. Thus, we need to specify how nodes map to the groupings $g^{(X)}$ and $g^{(Y)}$ in a different way for the graphon, as compared to the adjacency matrix. This difference is accounted for by using different mapping functions: $\widetilde{z}^{(X)}(i')$ and $\widetilde{z}^{(Y)}(j')$ for the graphon, and $z^{(X)}(i)$ and $z^{(Y)}(j)$ for the adjacency matrix. I.e., $\widetilde{z}^{(X)}(i')$ and $\widetilde{z}^{(Y)}(j')$ are required to specify the (contiguous) ranges and locations of the $X$ and $Y$-node groupings $g^{(X)}$ and $g^{(Y)}$ on the graphon margins, and equivalently $z^{(X)}(i)$ and $z^{(Y)}(j)$ for their (non-contiguous) locations on the adjacency matrix margins.

Using the MISE formulation of Lemma \ref{MISElem}, we can estimate the optimal numbers of $X$ and $Y$-node groupings, $k^{(X)}$ and $k^{(Y)}$.
\begin{Proposition}\label{optClustProp}
With $m$ and $l$ defined as in Definition \ref{mainModelDef}, and $\widetilde{M}$ and $\rho$ defined as in Lemma \ref{MISElem}, the optimal number of $X$ and $Y$-node groupings, $k^{(X)}$ and $k^{(Y)}$ respectively, are:
\begin{equation}
k^{(X)}=\gamma\cdot\left(ml\right)^{\frac{1}{4}}\cdot\left(2\rho \widetilde{M}^2\right)^{\frac{1}{4}}\label{estKx}
\end{equation}
and
\begin{equation}
k^{(Y)}=\frac{1}{\gamma}\cdot\left(ml\right)^{\frac{1}{4}}\cdot\left(2\rho \widetilde{M}^2\right)^{\frac{1}{4}}.\label{estKy}
\end{equation}
\end{Proposition}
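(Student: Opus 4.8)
The plan is to treat the right-hand side of equation~\ref{MISEexpr} as an explicit function of the two bandwidth parameters $h^{(X)}$ and $h^{(Y)}$ and minimise it. First I would record the correspondence between bandwidths and grouping counts: since $h^{(X)}$ (respectively $h^{(Y)}$) is the number of $X$-nodes (respectively $Y$-nodes) placed in each block, we have $k^{(X)}=m/h^{(X)}$ and $k^{(Y)}=l/h^{(Y)}$, so it suffices to find the optimal bandwidths and invert. Next I would observe that in equation~\ref{MISEexpr} the middle term $2\widetilde{M}^2\{\gamma^2/(4m)+\gamma^{-2}/(4l)\}\{1+o(1)\}$ does not depend on $h^{(X)}$ or $h^{(Y)}$ --- it is the irreducible contribution arising from observing the latent orderings $\xi^{(X)},\xi^{(Y)}$ only at resolutions $1/m$ and $1/l$ --- and I would likewise discard the $o(1)$ factors, so that the quantity to be minimised is $g(h^{(X)},h^{(Y)}):=\widetilde{M}^2\gamma^2(h^{(X)})^2/m^2+\widetilde{M}^2\gamma^{-2}(h^{(Y)})^2/l^2+1/(\rho\,h^{(X)}h^{(Y)})$, a sum of two increasing squared-bias terms and one decreasing variance term.

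Then I would carry out the bivariate optimisation by setting $\partial g/\partial h^{(X)}=\partial g/\partial h^{(Y)}=0$. Multiplying the first stationarity equation by $h^{(X)}$ and the second by $h^{(Y)}$ turns both right-hand sides into the common variance term $1/(\rho\,h^{(X)}h^{(Y)})$, so equating the left-hand sides gives the balance relation $\widetilde{M}^2\gamma^2(h^{(X)})^2/m^2=\widetilde{M}^2\gamma^{-2}(h^{(Y)})^2/l^2$, equivalently $h^{(Y)}/h^{(X)}=\gamma^2 l/m$; this expresses that at the optimum the integrated squared error is split evenly between the two coordinate directions, which is precisely the purpose of the equi-smoothing anisotropy factor $\gamma$. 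Substituting this relation back into either stationarity equation reduces it to a single quartic, solved by $h^{(X)}=(m^3/(2\rho\widetilde{M}^2\gamma^4 l))^{1/4}$ and symmetrically $h^{(Y)}=(\gamma^4 l^3/(2\rho\widetilde{M}^2 m))^{1/4}$. Forming $k^{(X)}=m/h^{(X)}$ and $k^{(Y)}=l/h^{(Y)}$ and simplifying the exponents yields $k^{(X)}=\gamma(ml)^{1/4}(2\rho\widetilde{M}^2)^{1/4}$ and $k^{(Y)}=\gamma^{-1}(ml)^{1/4}(2\rho\widetilde{M}^2)^{1/4}$, i.e.\ equations~\ref{estKx} and~\ref{estKy}. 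A short verification that the Hessian of $g$ is positive definite on the positive orthant (equivalently, that $g\to\infty$ on the boundary and at infinity while having a unique interior critical point) confirms this is the global minimiser.

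I do not anticipate a genuine obstacle: once Lemma~\ref{MISElem} is available, the argument is an elementary several-variable calculus optimisation. The only points needing care are bookkeeping ones --- justifying that the $h$-independent middle term and the $o(1)$ factors may be dropped before optimising (they only perturb the optimiser at lower order), and noting that $k^{(X)},k^{(Y)}$ are obtained only up to rounding to integers, which is harmless in the regime $m,l\to\infty$ in which the bound of Lemma~\ref{MISElem} is meaningful.
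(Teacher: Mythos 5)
Your proposal is correct and follows essentially the same route as the paper's own proof: differentiate the MISE bound of Lemma~\ref{MISElem} with respect to $h^{(X)}$ and $h^{(Y)}$, set the derivatives to zero, combine the resulting equations, and substitute the optimal bandwidths into $k^{(X)}=m/h^{(X)*}$ and $k^{(Y)}=l/h^{(Y)*}$. You simply carry out explicitly the calculus that the paper leaves implicit (the balance relation $h^{(Y)}/h^{(X)}=\gamma^{2}l/m$, the closed forms for the optimal bandwidths, and the check that the critical point is the global minimiser), and your final expressions agree with equations~\ref{estKx} and~\ref{estKy}.
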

\begin{proof}
The proof of this proposition is developed from the equivalent proof for the case of the isotropic graphon (corresponding to community detection in unipartite networks) \citep{olhede2014net}. The optimal bandwiths $h^{(X)*}$ and $h^{(Y)*}$ can be found by optimising the expression for the MISE of equation \ref{MISEexpr} with respect to $h^{(X)}$ and to $h^{(Y)}$ and setting to zero, and combining the resulting equations. To calculate $k^{(X)}$ and $k^{(Y)}$, substitute these optimal bandwiths $h^{(X)*}$ and $h^{(Y)*}$ into $k^{(X)}=m/h^{(X)*}$ and $k^{(Y)}=l/h^{(Y)*}$, which leads to equations \ref{estKx} and \ref{estKy}.
\end{proof}
\noindent
We note that the above proof of Proposition \ref{optClustProp} implies that $X$-node groupings are all the same size, and that the $Y$-node groupings are all the same size. This assumption is relaxed in the practical implementation of this methodology we propose: this point is discussed further in Section \ref{practSizeEstSect}.

\subsubsection*{Practical estimation of the number of $X$ and $Y$-node groupings}\label{practSizeEstSect}
We implement spectral clustering by including a standard $k$-means step, to group the $X$- and $Y$-nodes in the spaces of the left and right singular vectors corresponding to the 2\textsuperscript{nd} to $k^{(X)}$\textsuperscript{th} and 2\textsuperscript{nd} to $k^{(Y)}$\textsuperscript{th} greatest singular values, respectively, of the singular value decomposition of the co-Laplacian $\mathbf{L}_{XY}$ (equation \ref{coLaplacian}). This $k$-means step does not produce identical group sizes, however we note that the estimates of $k^{(X)}$ and $k^{(Y)}$ defined according to equations \ref{estKx} and \ref{estKy} assume that the $X$ and $Y$ node groupings are the same size (i.e., that the blocks in the blockmodel are all the same size with respect to the $X$-nodes, and separately with respect to the $Y$-nodes). We relax this requirement in practice (whilst noting that we must maintain $ \min_i h^{(X)}_i/\max_i h^{(X)}_i=\Theta(1)$), because after examining several empirical data-sets of the type presented in the next section, we observed that the group sizes produced by this type of regularised degree-corrected spectral clustering, tend not to vary significantly in size (there are no `giant clusters'). Further, this requirement of identical group sizes is not physically realistic in the practical examples we present in the next section, and in many other real scenarios.

To estimate $\widetilde{M}$ and $\gamma$, we approximate the maximum slope of the graphon separately in the directions corresponding to the $X$ and $Y$-nodes, by considering the top component of the singular value decomposition of the adjacency matrix $\mathbf{A}$. This is equivalent to the rule-of-thumb procedure in the network histogram method, in the symmetric/unipartite community detection scenario \citep{olhede2014net}. The top left and right singular vectors are ordered, and their gradients and values at their midpoints (the expected points of maximum slope) are estimated as $\hat{p}_X$ and $\hat{b}_X$ respectively for the $X$-nodes, and $\hat{p}_Y$ and $\hat{b}_Y$ respectively for the $Y$-nodes. By thinking of this singular value decomposition as a factorisation of the scaled, discretely-sampled graphon (i.e., the ordered adjacency matrix), denoting the greatest singular value as $\nu$, leads to the linear approximations for the maximum gradient of the isotropic graphon $M$ in the directions of the $X$ and $Y$-nodes, $M_X$ and $M_Y$ respectively:
\begin{equation*}
\hat{M}_X=\frac{\nu}{\rho}\hat{p}_X\hat{b}_Ym,\qquad\hat{M}_Y=\frac{\nu}{\rho}\hat{b}_X\hat{p}_Yl,
\end{equation*}
where $m$ and $l$ are the number of $X$ and $Y$-nodes respectively (as previously defined). These factors $m$ and $l$ take account of the fact that the isotropic graphon margins are bounded on $[0,1]$, whereas the adjacency matrix margins take the values $\{1,...,m\}$ and $\{1,...,l\}$, and the edge density factor $\rho$ (defined as in Lemma \ref{MISElem}) normalises with respect to the adjacency matrix realisation, such that the above estimates are independent of edge density $\rho$. The linear stretch transformation $\gamma$ defines the maximum gradients of the equi-smooth graphon as $\widetilde{M}_X=\gamma M_X$ and $\widetilde{M}_Y=M_Y/\gamma $ respectively, and hence an estimate of the squared maximum gradient of the isotropic graphon can be found as:
\begin{equation*}
\hat{\widetilde{M}}^2=\gamma^2\cdot\hat{M_X}^2+\frac{1}{\gamma^2}\cdot\hat{M_Y}^2=\frac{\nu^2}{\rho^2}\left(\gamma^2\cdot\hat{p}_X^2\hat{b}_Y^2m^2+\frac{1}{\gamma^2}\cdot\hat{b}_X^2\hat{p}_Y^2l^2\right).
\end{equation*}
Using the assumption that the equi-smooth graphon is Lipschitz-continuous, with the same upper-bound on its smoothness with respect to both the $X$ and $Y$ nodes, i.e., $\widetilde{M}_X=\widetilde{M}_Y$, $\implies$ $\gamma M_X=M_Y/\gamma$, we can estimate $\gamma$ as:
\begin{equation}
\hat{\gamma}^2=\frac{\hat{M}_Y}{\hat{M}_X}.
\end{equation}

\subsection*{Appendix D: Proof of Lemma \ref{MISElem}}
Define $\mathbf{A}$, $k^{(X)}$, $k^{(Y)}$ according to Definition \ref{mainModelDef}, define $\xi^{(X)}$ and $\xi^{(Y)}$ according to Definition \ref{anisGraphModelDef}, define $f$, $\widetilde{f}$ and $\gamma$ according to Definition \ref{anGraGammaDef}, and define $\rho$ and $\widetilde{M}$ according to Lemma \ref{MISElem}. Define bandwidths $h^{(X)}_p=\left|g^{(X)}_p\right|$ and $h^{(Y)}_q=\left|g^{(Y)}_q\right|$, where $\left|\cdot\right|$ represents cardinality, define $\omega\left(p,q\right)$ as the domain of integration over the block corresponding to the pairing of $g^{(X)}_p$ with $g^{(Y)}_q$ (where $g^{(X)}_p$ and $g^{(Y)}_q$ are sets of $X$-nodes and $Y$ nodes, with $G^{(X)}$ and $G^{(Y)}$ respectively the sets of $g^{(X)}_p$ and $g^{(Y)}_q$ over $p\in\left\{1,...,k^{(X)}\right\}$ and $q\in\left\{1,...,k^{(Y)}\right\}$), and define $\bar{A}_{p,q}$ as the block average corresponding to the pairing of $g^{(X)}_p$ with $g^{(Y)}_q$,
\begin{equation*}
\bar{A}_{p,q}=\frac{\sum_{j\in g^{(Y)}_q}\sum_{i\in g^{(X)}_p}A_{ij}}{h^{(X)}_p\cdot h^{(Y)}_q}.
\end{equation*}
For convenience, we also define here theoretical relations to $\bar{A}_{p,q}$, by denoting the average values of $f$ and $f^2$ over the block corresponding to the pairing of $g^{(X)}_p$ with $g^{(Y)}_q$ as $\bar{f}_{p,q}$ and $\bar{f^2}_{p,q}$ respectively:
\begin{equation}
\bar{f}_{p,q}=\frac{1}{\left|\omega\left(p,q\right)\right|}\iint_{\omega\left(p,q\right)}f(x,y)dx\ dy\label{fBarDef}
\end{equation}
and
\begin{equation}
\bar{f^2}_{p,q}=\frac{1}{\left|\omega\left(p,q\right)\right|}\iint_{\omega\left(p,q\right)}f^2(x,y)dx\ dy,\label{f2BarDef}
\end{equation}
where
\begin{equation*}
\left|\omega\left(p,q\right)\right|=\frac{h^{(X)}_p}{m}\cdot\frac{h^{(Y)}_q}{l}.
\end{equation*}
The bias-variance decomposition of the oracle MISE of the blockmodel approximation of the graphon function $\hat{f}$ can hence be written as \citep{olhede2014net}:
\begin{multline}
\mathrm{MISE\left(\hat{f}\right)}\leq\mathbb{E}\iint_{(0,1)^2}\left|f(x,y)-\hat{f}(x,y)\right|^2dx\ dy=\\\label{MISEmain}
\sum_{q=1}^{k^{(Y)}}\sum_{p=1}^{k^{(X)}}\iint_{\omega\left(p,q\right)}\left\{\left|f(x,y)-\frac{\mathbb{E}\left(\bar{A}_{p,q}\right)}{\rho}\right|^2+\frac{\mathrm{Var}\left(\bar{A}_{p,q}\right)}{\rho^2}\right\}dx\ dy.
\end{multline}
The domain of integration $\omega\left(p,q\right)$ is hence a contiguous region of the graphon, which corresponds to entries of the adjacency matrix which are not necessarily contiguous.

Modelling the equi-smooth graphon $\widetilde{f}$ as a linear stretch transformation of the anisotropic graphon $f$, by anisotropy factor $\gamma$, means that we can write:
\begin{equation*}
f(x,y)=\widetilde{f}\left(\gamma x,y/\gamma\right).
\end{equation*}
We define the graphon oracle \citep{wolfe2013nonparametric,olhede2014net} ordering of the $X$ and $Y$-nodes according to $\xi^{(X)}$ and $\xi^{(Y)}$ respectively. These are unobservable latent random vectors, which map the locations of the $X$ and $Y$ nodes from the margins of the graphon to the margins of the adjacency matrix. I.e., $\xi^{(X)}_i$ and $\xi^{(Y)}_j$ provide the locations on the graphon margins which correspond to the $X$ and $Y$-nodes $i$ and $j$ respectively, where $i$ and $j$ are the adjacency matrix indices of these nodes. We define $(i)^{-1}$ as a function which gives the rank of $\xi^{(X)}_i$, $1\leq i\leq m$, and similarly $(j)^{-1}$ as a function which gives the rank of $\xi^{(Y)}_j$, $1\leq j\leq l$. Therefore, $(i)^{-1}$ and $(j)^{-1}$ are functions which take the ordering along the adjacency matrix margins, and return the ordering along the graphon margins. Hence, the inverses of these functions, $(i)$ and $(j)$, take the ordering along the graphon margins, and return the corresponding ordering along the adjacency matrix margins. Adapting the proof of Lemma 3 from \citep{olhede2014net} to the anisotropic graphon, by defining $i_m=i/(m+1)$ and $j_l=j/(l+1)$, and assuming that $\widetilde{f}$ is Lipschitz-continuous, gives:
\begin{align*}
\left|f\left(\xi^{(X)}_{\left(i\right)},\xi^{(Y)}_{\left(j\right)}\right)-f\left(i_m,j_l\right)\right|=\left|\widetilde{f}\left(\gamma\xi^{(X)}_{\left(i\right)},\xi^{(Y)}_{\left(j\right)}/\gamma\right)-\widetilde{f}\left(\gamma i_m,j_l/\gamma\right)\right|\\[1ex]
\leq \widetilde{M}\left|\left(\gamma\xi^{(X)}_{\left(i\right)},\xi^{(Y)}_{\left(j\right)}/\gamma\right)-\left(\gamma i_m,j_l/\gamma\right)\right|.
\end{align*}
Writing the variances and applying Jensen's inequality as in \citep{olhede2014net} we get,
\begin{align*}
&\mathrm{Var}\left(\xi^{(X)}_{\left(i\right)}\right)=\frac{i_m(1-i_m)}{m+2}\leq\frac{1/4}{m+2},\\[1ex]
&\mathrm{Var}\left(\xi^{(Y)}_{\left(j\right)}\right)=\frac{j_l(1-j_l)}{l+2}\leq\frac{1/4}{l+2},\\
\implies&\mathbb{E}_{\xi^{(X)},\xi^{(Y)}}\left\{\gamma^2\left(\xi^{(X)}_{\left(i\right)}-i_m\right)^2+\frac{1}{\gamma^2}\left(\xi^{(Y)}_{\left(j\right)}-j_l\right)^2\right\}^{\frac{1}{2}}\\
&\quad\leq\left(\gamma^2\mathrm{Var}\left(\xi^{(X)}_{\left(i\right)}\right)+\frac{1}{\gamma^2}\mathrm{Var}\left(\xi^{(Y)}_{\left(j\right)}\right)\right)^{\frac{1}{2}}\\
&\quad\leq\left\{\gamma^2\cdot\frac{1}{4(m+2)}+\frac{1}{\gamma^2}\cdot\frac{1}{4(l+2)}\right\}^{\frac{1}{2}},\\
\therefore&\enskip\mathbb{E}_{\xi^{(X)},\xi^{(Y)}}\left|f\left(\xi^{(X)}_{\left(i\right)},\xi^{(Y)}_{\left(j\right)}\right)-f\left(i_m,j_l\right)\right|\leq\widetilde{M}\left\{\gamma^2\cdot\frac{1}{4(m+2)}+\frac{1}{\gamma^2}\cdot\frac{1}{4(l+2)}\right\}^{\frac{1}{2}}.\numberthis\label{expAt1}
\end{align*}
We note that this explains the stretching factor $\gamma$. Now adapting Lemma 2 from \citep{olhede2014net}, we apply the law of iterated expectations to  $A_{(i)(j)}$, to obtain:
\begin{equation}
\mathbb{E}\left(A_{(i)(j)}\right)=\mathbb{E}_{\xi^{(X)},\xi^{(Y)}}\left[\mathbb{E}_{A|\xi^{(X)},\xi^{(Y)}}\left(A_{(i)(j)}\middle|\xi^{(X)},\xi^{(Y)}\right)\right]=\mathbb{E}_{\xi^{(X)},\xi^{(Y)}}\left[\rho f\left(\xi^{(X)}_{\left(i\right)},\xi^{(Y)}_{\left(j\right)}\right)\right],\label{expAt2}
\end{equation}
then using Jensen's inequality we get:
\begin{equation}
\left|\mathbb{E}_{\xi^{(X)},\xi^{(Y)}}\left[\rho f\left(\xi^{(X)}_{\left(i\right)},\xi^{(Y)}_{\left(j\right)}\right)\right]-\rho f\left(i_m,j_l\right)\right|\leq\rho\mathbb{E}_{\xi^{(X)},\xi^{(Y)}}\left[\left|f\left(\xi^{(X)}_{\left(i\right)},\xi^{(Y)}_{\left(j\right)}\right)-f\left(i_m,j_l\right)\right|\right],\label{expAt3}
\end{equation}
and hence combining equations \ref{expAt1}-\ref{expAt3}, we have:
\begin{equation}
\left|\mathbb{E}\left(A_{(i)(j)}\right)-\rho f\left(i_m,j_l\right)\right|\leq\rho \widetilde{M}\left\{\gamma^2\cdot\frac{1}{4(m+2)}+\frac{1}{\gamma^2}\cdot\frac{1}{4(l+2)}\right\}^{\frac{1}{2}}.\label{lem2exp}
\end{equation}
Now applying the law of total variance to $A_{(i)(j)}$, as in Lemma 2 from \citep{olhede2014net}, we obtain:
\begin{align*}
\text{Var}\left(A_{(i)(j)}\right)=&\mathbb{E}_{\xi^{(X)},\xi^{(Y)}}\left[\text{Var}_{A|\xi^{(X)},\xi^{(Y)}}\left(A_{(i)(j)}\middle|\xi^{(X)},\xi^{(Y)}\right)\right]\\
&+\text{Var}_{\xi^{(X)},\xi^{(Y)}}\left[\mathbb{E}_{A|\xi^{(X)},\xi^{(Y)}}\left(A_{(i)(j)}\middle|\xi^{(X)},\xi^{(Y)}\right)\right]\\
=&\mathbb{E}_{\xi^{(X)},\xi^{(Y)}}\left[\rho f\left(\xi^{(X)}_{\left(i\right)},\xi^{(Y)}_{\left(j\right)}\right)\left(1-\rho f\left(\xi^{(X)}_{\left(i\right)},\xi^{(Y)}_{\left(j\right)}\right)\right)\right]\\
&+\mathbb{E}_{\xi^{(X)},\xi^{(Y)}}\left[\rho^2\left(f\left(\xi^{(X)}_{\left(i\right)},\xi^{(Y)}_{\left(j\right)}\right)\right)^2\right]-\left(\mathbb{E}_{\xi^{(X)},\xi^{(Y)}}\left[\rho f\left(\xi^{(X)}_{\left(i\right)},\xi^{(Y)}_{\left(j\right)}\right)\right]\right)^2\\
=&\mathbb{E}_{\xi^{(X)},\xi^{(Y)}}\left[\rho f\left(\xi^{(X)}_{\left(i\right)},\xi^{(Y)}_{\left(j\right)}\right)\right]-\mathbb{E}_{\xi^{(X)},\xi^{(Y)}}\left[\rho^2\left(f\left(\xi^{(X)}_{\left(i\right)},\xi^{(Y)}_{\left(j\right)}\right)\right)^2\right]\\
&+\mathbb{E}_{\xi^{(X)},\xi^{(Y)}}\left[\rho^2\left(f\left(\xi^{(X)}_{\left(i\right)},\xi^{(Y)}_{\left(j\right)}\right)\right)^2\right]-\left(\mathbb{E}_{\xi^{(X)},\xi^{(Y)}}\left[\rho f\left(\xi^{(X)}_{\left(i\right)},\xi^{(Y)}_{\left(j\right)}\right)\right]\right)^2\\
=&\mathbb{E}_{\xi^{(X)},\xi^{(Y)}}\left[\rho f\left(\xi^{(X)}_{\left(i\right)},\xi^{(Y)}_{\left(j\right)}\right)\right]\left\{\mathbb{E}_{\xi^{(X)},\xi^{(Y)}}\left[1-\rho f\left(\xi^{(X)}_{\left(i\right)},\xi^{(Y)}_{\left(j\right)}\right)\right]\right\}.\numberthis\label{varAt1}
\end{align*}
From equation \ref{expAt1}, we get:
\begin{equation}
\mathbb{E}_{\xi^{(X)},\xi^{(Y)}}\left[\rho f\left(\xi^{(X)}_{\left(i\right)},\xi^{(Y)}_{\left(j\right)}\right)\right]\leq\rho f\left(i_m,j_l\right)+\rho\widetilde{M}\left\{\gamma^2\cdot\frac{1}{4(m+2)}+\frac{1}{\gamma^2}\cdot\frac{1}{4(l+2)}\right\}^{\frac{1}{2}}\label{varAintT1}
\end{equation}
and
\begin{equation}
-\mathbb{E}_{\xi^{(X)},\xi^{(Y)}}\left[\rho f\left(\xi^{(X)}_{\left(i\right)},\xi^{(Y)}_{\left(j\right)}\right)\right]\leq-\rho f\left(i_m,j_l\right)+\rho\widetilde{M}\left\{\gamma^2\cdot\frac{1}{4(m+2)}+\frac{1}{\gamma^2}\cdot\frac{1}{4(l+2)}\right\}^{\frac{1}{2}},\label{varAintT2}
\end{equation}
and hence also
\begin{equation}
\mathbb{E}_{\xi^{(X)},\xi^{(Y)}}\left[1-\rho f\left(\xi^{(X)}_{\left(i\right)},\xi^{(Y)}_{\left(j\right)}\right)\right]\geq1-\rho f\left(i_m,j_l\right)-\rho\widetilde{M}\left\{\gamma^2\cdot\frac{1}{4(m+2)}+\frac{1}{\gamma^2}\cdot\frac{1}{4(l+2)}\right\}^{\frac{1}{2}}\label{varAintT3}
\end{equation}
and
\begin{equation}
-\mathbb{E}_{\xi^{(X)},\xi^{(Y)}}\left[1-\rho f\left(\xi^{(X)}_{\left(i\right)},\xi^{(Y)}_{\left(j\right)}\right)\right]\geq-1+\rho f\left(i_m,j_l\right)-\rho\widetilde{M}\left\{\gamma^2\cdot\frac{1}{4(m+2)}+\frac{1}{\gamma^2}\cdot\frac{1}{4(l+2)}\right\}^{\frac{1}{2}}.\label{varAintT4}
\end{equation}
Now combining equation \ref{varAintT1} with the negative of equation \ref{varAintT4} and applying equation equation \ref{varAt1} we get:
\begin{align*}
\text{Var}\left(A_{(i)(j)}\right)\leq&\left[\rho f\left(i_m,j_l\right)+\rho\widetilde{M}\left\{\gamma^2\cdot\frac{1}{4(m+2)}+\frac{1}{\gamma^2}\cdot\frac{1}{4(l+2)}\right\}^{\frac{1}{2}}\right]\\
&\quad\cdot\left[1-\rho f\left(i_m,j_l\right)+\rho\widetilde{M}\left\{\gamma^2\cdot\frac{1}{4(m+2)}+\frac{1}{\gamma^2}\cdot\frac{1}{4(l+2)}\right\}^{\frac{1}{2}}\right]
\end{align*}
and hence:
\begin{align*}
\text{Var}&\left(A_{(i)(j)}\right)\leq\rho f\left(i_m,j_l\right)\left[1-\rho f\left(i_m,j_l\right)\right]\\
&+\rho\widetilde{M}\left\{\gamma^2\cdot\frac{1}{4(m+2)}+\frac{1}{\gamma^2}\cdot\frac{1}{4(l+2)}\right\}^{\frac{1}{2}}\left[1+\rho\widetilde{M}\left\{\gamma^2\cdot\frac{1}{4(m+2)}+\frac{1}{\gamma^2}\cdot\frac{1}{4(l+2)}\right\}^{\frac{1}{2}}\right].\\\numberthis\label{varAt2}
\end{align*}
Similarly combining the negative of equation \ref{varAintT2} with equation \ref{varAintT3} and applying equation equation \ref{varAt1} we get:
\begin{align*}
\text{Var}\left(A_{(i)(j)}\right)\geq&\left[\rho f\left(i_m,j_l\right)-\rho\widetilde{M}\left\{\gamma^2\cdot\frac{1}{4(m+2)}+\frac{1}{\gamma^2}\cdot\frac{1}{4(l+2)}\right\}^{\frac{1}{2}}\right]\\
&\quad\cdot\left[1-\rho f\left(i_m,j_l\right)-\rho\widetilde{M}\left\{\gamma^2\cdot\frac{1}{4(m+2)}+\frac{1}{\gamma^2}\cdot\frac{1}{4(l+2)}\right\}^{\frac{1}{2}}\right],
\end{align*}
and hence:
\begin{align*}
\text{Var}&\left(A_{(i)(j)}\right)\geq\rho f\left(i_m,j_l\right)\left[1-\rho f\left(i_m,j_l\right)\right]\\
&-\rho\widetilde{M}\left\{\gamma^2\cdot\frac{1}{4(m+2)}+\frac{1}{\gamma^2}\cdot\frac{1}{4(l+2)}\right\}^{\frac{1}{2}}\left[1-\rho\widetilde{M}\left\{\gamma^2\cdot\frac{1}{4(m+2)}+\frac{1}{\gamma^2}\cdot\frac{1}{4(l+2)}\right\}^{\frac{1}{2}}\right],
\end{align*}
and therefore:
\begin{align*}
-\text{Var}&\left(A_{(i)(j)}\right)\leq-\rho f\left(i_m,j_l\right)\left[1-\rho f\left(i_m,j_l\right)\right]\\
&+\rho\widetilde{M}\left\{\gamma^2\cdot\frac{1}{4(m+2)}+\frac{1}{\gamma^2}\cdot\frac{1}{4(l+2)}\right\}^{\frac{1}{2}}\left[1-\rho\widetilde{M}\left\{\gamma^2\cdot\frac{1}{4(m+2)}+\frac{1}{\gamma^2}\cdot\frac{1}{4(l+2)}\right\}^{\frac{1}{2}}\right]\\\numberthis\label{varAt3}
\leq&-\rho f\left(i_m,j_l\right)\left[1-\rho f\left(i_m,j_l\right)\right]\\
&+\rho\widetilde{M}\left\{\gamma^2\cdot\frac{1}{4(m+2)}+\frac{1}{\gamma^2}\cdot\frac{1}{4(l+2)}\right\}^{\frac{1}{2}}\left[1+\rho\widetilde{M}\left\{\gamma^2\cdot\frac{1}{4(m+2)}+\frac{1}{\gamma^2}\cdot\frac{1}{4(l+2)}\right\}^{\frac{1}{2}}\right],
\end{align*}
and hence combining equations \ref{varAt2} and \ref{varAt3} we get:
\begin{multline}
\left|\mathrm{Var}\left(A_{(i)(j)}\right)-\rho f\left(i_m,j_l\right)\left[1-\rho f\left(i_m,j_l\right)\right]\right|\\\label{lem2var}
\leq\rho\widetilde{M}\left\{\gamma^2\cdot\frac{1}{4(m+2)}+\frac{1}{\gamma^2}\cdot\frac{1}{4(l+2)}\right\}^{\frac{1}{2}}\cdot\left[1+\rho \widetilde{M}\left\{\gamma^2\cdot\frac{1}{4(m+2)}+\frac{1}{\gamma^2}\cdot\frac{1}{4(l+2)}\right\}^{\frac{1}{2}}\right].
\end{multline}
Now referring to equation \ref{lem2exp} and comparing it to equation 6 of Supporting Information Section A in \citep{olhede2014net}, allows us to re-write the covariance expression in Lemma 2 of \citep{olhede2014net} giving:
\begin{equation}
\mathrm{Cov}\left(A_{(i)(j)},A_{(i')(j')}\right)\leq\rho^2\widetilde{M}^2\left\{\gamma^2\cdot\frac{1}{4(m+2)}+\frac{1}{\gamma^2}\cdot\frac{1}{4(l+2)}\right\}\label{lem2cov},
\end{equation}
$i\neq i'$, $j\neq j'$.
We can then use equations \ref{lem2exp}, \ref{lem2var} and \ref{lem2cov} to adapt Proposition 1 from \citep{olhede2014net}, also referring to equations \ref{fBarDef} and \ref{f2BarDef}, to give:
\begin{equation}
\left|\mathbb{E}\left(\bar{A}_{p,q}\right)-\rho\bar{f}_{p,q}\right|\leq\rho \widetilde{M}\left\{\gamma^2\cdot\frac{1}{4m}+\frac{1}{\gamma^2}\cdot\frac{1}{4l}\right\}^{\frac{1}{2}}\left\{1+o\left(1\right)\right\}\label{OWprop1exp}
\end{equation}
and
\begin{multline}
\left|\mathrm{Var}\left(\bar{A}_{p,q}\right)-\frac{\rho\bar{f}_{p,q}-\rho^2\bar{f^2}_{p,q}}{h^{(X)}_p\cdot h^{(Y)}_q}\right|\\[2ex]
\leq\frac{\rho \widetilde{M}}{h^{(X)}_p\cdot h^{(Y)}_q}\left\{\gamma^2\cdot\frac{1}{4m}+\frac{1}{\gamma^2}\cdot\frac{1}{4l}\right\}^{\frac{1}{2}}\left\{1+o\left(1\right)\right\}+\rho^2\widetilde{M}^2\left\{\gamma^2\cdot\frac{1}{4m}+\frac{1}{\gamma^2}\cdot\frac{1}{4l}\right\},\label{OWprop1var}
\end{multline}
which is a conservative upper bound. Now substituting equation \ref{OWprop1var} back into equation \ref{MISEmain}, we get:
\begin{multline*}
\mathrm{MISE\left(\hat{f}\right)}\leq\sum_{q=1}^{k^{(Y)}}\sum_{p=1}^{k^{(X)}}\iint_{\omega\left(p,q\right)}\left[\left|\left\{f(x,y)-\bar{f}_{p,q}\right\}\right.\right.\\
\left. +\left\{\bar{f}_{p,q}-\mathbb{E}\left(\bar{A}_{p,q}\right)/\rho\right\}\right|^2+\frac{\bar{f}_{p,q}-\rho\bar{f^2}_{p,q}}{\rho\cdot h^{(X)}_p\cdot h^{(Y)}_q}\\
+\left. \frac{\widetilde{M}}{\rho\cdot h^{(X)}_p\cdot h^{(Y)}_q}\left\{\gamma^2\cdot\frac{1}{4m}+\frac{1}{\gamma^2}\cdot\frac{1}{4l}\right\}^{\frac{1}{2}}\left\{1+o\left(1\right)\right\}+\widetilde{M}^2\left\{\gamma^2\cdot\frac{1}{4m}+\frac{1}{\gamma^2}\cdot\frac{1}{4l}\right\}\right]dx\ dy,
\end{multline*}
then substituting equation \ref{OWprop1exp}, integrating and rearranging, leads to:
\begin{multline}
\mathrm{MISE\left(\hat{f}\right)}\leq\sum_{q=1}^{k^{(Y)}}\sum_{p=1}^{k^{(X)}}\left[\iint_{\omega\left(p,q\right)}\left|f(x,y)-\bar{f}_{p,q}\right|^2dx\ dy\right.\\\label{MISEintermediate}
+\left(2\widetilde{M}^2\left\{\gamma^2\cdot\frac{1}{4m}+\frac{1}{\gamma^2}\cdot\frac{1}{4l}\right\}\left\{1+o\left(1\right)\right\}+\frac{\bar{f}_{p,q}-\rho\bar{f^2}_{p,q}}{\rho\cdot h^{(X)}_p\cdot h^{(Y)}_q}\right. \\
+\left.\left. \frac{\widetilde{M}}{\rho\cdot h^{(X)}_p\cdot h^{(Y)}_q}\left\{\gamma^2\cdot\frac{1}{4m}+\frac{1}{\gamma^2}\cdot\frac{1}{4l}\right\}^{\frac{1}{2}}\left\{1+o\left(1\right)\right\}\right)\cdot\frac{h^{(X)}_p}{m}\cdot\frac{h^{(Y)}_q}{l}\right].
\end{multline}
Then, adapting the proof of Lemma 1 from \citep{olhede2014net}, we can write:
\begin{align*}
\left|\bar{f}_{p,q}-f(x,y)\right|=\left|\frac{1}{\left|\omega\left(p,q\right)\right|}\iint_{\omega\left(p,q\right)}f(x',y')dx'\ dy'-f(x,y)\right|\\[2ex]
\leq\frac{1}{\left|\omega\left(p,q\right)\right|}\iint_{\omega\left(p,q\right)}\left|\widetilde{f}(\gamma x',y'/\gamma)-\widetilde{f}(\gamma x,y/\gamma)\right|dx'\ dy'.
\end{align*}
Assuming $\widetilde{f}$ is Lipschitz continuous, it therefore follows that:
\begin{align*}
\left|\bar{f}_{p,q}-f(x,y)\right|\leq\frac{1}{\left|\omega\left(p,q\right)\right|}\iint_{\omega\left(p,q\right)}\widetilde{M}\left|(\gamma x',y'/\gamma)-(\gamma x,y/\gamma)\right|dx'\ dy'\\
\leq\frac{1}{\left|\omega\left(p,q\right)\right|}\iint_{\omega\left(p,q\right)}\widetilde{M}\sqrt{\gamma^2\cdot\frac{\left(h^{(X)}_p\right)^2}{m^2}+\frac{1}{\gamma^2}\cdot\frac{\left(h^{(Y)}_q\right)^2}{l^2}}dx'\ dy'
\end{align*}
\begin{equation*}
\implies\left|\bar{f}_{p,q}-f(x,y)\right|\leq \widetilde{M}\sqrt{\gamma^2\cdot\frac{\left(h^{(X)}_p\right)^2}{m^2}+\frac{1}{\gamma^2}\cdot\frac{\left(h^{(Y)}_q\right)^2}{l^2}}
\end{equation*}
and therefore
\begin{equation*}
\frac{1}{\left|\omega\left(p,q\right)\right|}\iint_{\omega\left(p,q\right)}\left|\bar{f}_{p,q}-f(x,y)\right|^2\leq \widetilde{M}^2\left\{\gamma^2\cdot\frac{\left(h^{(X)}_p\right)^2}{m^2}+\frac{1}{\gamma^2}\cdot\frac{\left(h^{(Y)}_q\right)^2}{l^2}\right\},
\end{equation*}
and hence summing over all the blocks corresponding to all pairings of $X$-node groupings $g^{(X)}\in G^{(X)}$ with $Y$-node groupings $g^{(Y)}\in G^{(Y)}$, and assuming $h^{(X)}$ and $h^{(Y)}$ are both constants, we get:
\begin{equation}
\sum_{q=1}^{k^{(Y)}}\sum_{p=1}^{k^{(X)}}\iint_{\omega\left(p,q\right)}\left|\bar{f}_{p,q}-f(x,y)\right|^2\leq \widetilde{M}^2\left\{\gamma^2\cdot\frac{\left(h^{(X)}\right)^2}{m^2}+\frac{1}{\gamma^2}\cdot\frac{\left(h^{(Y)}\right)^2}{l^2}\right\}.\label{OWlem1}
\end{equation}
Recalling equation \ref{fBarDef} and equation \ref{graphonIntConstr}, i.e., 
\begin{equation*}
\iint_{(0,1)^2}f(x,y)dx\ dy=1,
\end{equation*}
and noting that:
\begin{equation*}
\sum_{q=1}^{k^{(Y)}}\sum_{p=1}^{k^{(X)}}\frac{\bar{f}_{p,q}-\rho\bar{f^2}_{p,q}}{\rho\cdot h^{(X)}_p\cdot h^{(Y)}_q}\leq\sum_{q=1}^{k^{(Y)}}\sum_{p=1}^{k^{(X)}}\frac{\bar{f}_{p,q}}{\rho\cdot h^{(X)}\cdot h^{(Y)}},
\end{equation*}
we can see that:
\begin{align}
\label{blockSumSimp}\sum_{q=1}^{k^{(Y)}}\sum_{p=1}^{k^{(X)}}\frac{\bar{f}_{p,q}-\rho\bar{f^2}_{p,q}}{\rho\cdot h^{(X)}_p\cdot h^{(Y)}_q}\leq&\sum_{q=1}^{k^{(Y)}}\sum_{p=1}^{k^{(X)}}\frac{m\cdot l}{\rho\cdot \left(h^{(X)}\right)^2\cdot\left(h^{(Y)}\right)^2}\cdot\frac{h^{(X)}}{m}\cdot\frac{h^{(Y)}}{l}\cdot\bar{f}_{p,q}\\[1ex]\nonumber
&=\frac{m\cdot l}{\rho\cdot\left(h^{(X)}\right)^2\cdot\left(h^{(Y)}\right)^2}\sum_{q=1}^{k^{(Y)}}\sum_{p=1}^{k^{(X)}}\iint_{\omega\left(p,q\right)}f(x,y)dx\ dy\\\nonumber
&=\frac{m\cdot l}{\rho\cdot\left(h^{(X)}\right)^2\cdot\left(h^{(Y)}\right)^2}\iint_{(0,1)^2}f(x,y)dx\ dy\\\nonumber
&=\frac{m\cdot l}{\rho\cdot\left(h^{(X)}\right)^2\cdot\left(h^{(Y)}\right)^2}.\nonumber
\end{align}
Now substituting \ref{OWlem1} and \ref{blockSumSimp} into \ref{MISEintermediate}, and rearranging, we get:
\begin{multline*}
\mathrm{MISE\left(\hat{f}\right)}\leq \widetilde{M}^2\left\{\gamma^2\cdot\frac{\left(h^{(X)}\right)^2}{m^2}+\frac{1}{\gamma^2}\cdot\frac{\left(h^{(Y)}\right)^2}{l^2}\right\}\\+2\widetilde{M}^2\left\{\gamma^2\cdot\frac{1}{4m}+\frac{1}{\gamma^2}\cdot\frac{1}{4l}\right\}\left\{1+o\left(1\right)\right\}\\
+\frac{1}{\rho\cdot h^{(X)}\cdot\ h^{(Y)}}+\frac{\widetilde{M}}{\rho\cdot h^{(X)}\cdot h^{(Y)}}\left\{\gamma^2\cdot\frac{1}{4m}+\frac{1}{\gamma^2}\cdot\frac{1}{4l}\right\}^{\frac{1}{2}}\left\{1+o\left(1\right)\right\}
\end{multline*}
and hence:
\begin{multline*}
\mathrm{MISE\left(\hat{f}\right)}\leq \widetilde{M}^2\left\{\gamma^2\cdot\frac{\left(h^{(X)}\right)^2}{m^2}+\frac{1}{\gamma^2}\cdot\frac{\left(h^{(Y)}\right)^2}{l^2}\right\}\\+2\widetilde{M}^2\left\{\gamma^2\cdot\frac{1}{4m}+\frac{1}{\gamma^2}\cdot\frac{1}{4l}\right\}\left\{1+o\left(1\right)\right\}+\frac{1}{\rho\cdot h^{(X)}\cdot\ h^{(Y)}}\left\{1+o\left(1\right)\right\} .
\end{multline*}
\label{MISEsubbed}

\end{document}